 \renewcommand{\theequation}{\thesection\arabic{equation}}
\newtheorem{thm}{Theorem}
\newtheorem{defin}{Definition}
\newtheorem{lem}{Lemma}
\newtheorem{assum}{Assumption}
\newtheorem{rem}{Remark}
\newtheorem{con}{Condition}
\newtheorem{prop}{Proposition}
\newtheorem{Ex}{Example}
\begin{document}
	\renewcommand{\baselinestretch}{2}
	
	\markright{ \hbox{\footnotesize\rm Statistica Sinica
		}\hfill\\[-13pt]
		\hbox{\footnotesize\rm
		}\hfill }
	
	\markboth{\hfill{\footnotesize\rm Huan Qing} \hfill}
	{\hfill {\footnotesize\rm Multi-layer grade of membership model} \hfill}

	$\ $\par
	
	
	\fontsize{12}{14pt plus.8pt minus .6pt}\selectfont \vspace{0.8pc}
	\centerline{\large\bf Grade of membership analysis for multi-layer ordinal categorical data}
	\vspace{.4cm}
	\centerline{Huan Qing}
	\vspace{.4cm}
	\centerline{\it School of Economics and Finance, Chongqing University of Technology}

	\vspace{.55cm} \fontsize{9}{11.5pt plus.8pt minus.6pt}\selectfont
	
	
	\begin{quotation}
		\noindent {\it Abstract:}
		  Consider a group of individuals (subjects) participating in the same psychological tests with numerous questions (items) at different times, where the choices of each item have an implicit ordering. The observed responses can be recorded in multiple response matrices over time, named multi-layer ordinal categorical data, where layers refer to time points. Assuming that each subject has a common mixed membership shared across all layers, enabling it to be affiliated with multiple latent classes with varying weights, the objective of the grade of membership (GoM) analysis is to estimate these mixed memberships from the data. When the test is conducted only once, the data becomes traditional single-layer ordinal categorical data. The GoM model is a popular choice for describing single-layer categorical data with a latent mixed membership structure. However, GoM cannot handle multi-layer ordinal categorical data. In this work, we propose a new model, multi-layer GoM, which extends GoM to multi-layer ordinal categorical data. To estimate the common mixed memberships, we propose a new approach, GoM-DSoG, based on a debiased sum of Gram matrices. We establish GoM-DSoG's per-subject convergence rate under the multi-layer GoM model. Our theoretical results suggest that fewer no-responses, more subjects, more items, and more layers are beneficial for GoM analysis. We also propose an approach to select the number of latent classes. Extensive experimental studies verify the theoretical findings and show GoM-DSoG's superiority over its competitors, as well as the accuracy of our method in determining the number of latent classes.

		\vspace{9pt}
		\noindent {\it Key words and phrases:}
Multi-layer ordinal categorical data, multi-layer grade of membership model, debiased spectral clustering
		\par
	\end{quotation}\par
\def\thefigure{\arabic{figure}}
\def\thetable{\arabic{table}}

\renewcommand{\theequation}{\thesection.\arabic{equation}}

\fontsize{12}{14pt plus.8pt minus .6pt}\selectfont

%
%
%


\section{Introduction}\label{sec1}
Ordinal categorical data is extensively collected in psychological tests, educational assessments, and political surveys \citep{sloane1996introduction,agresti2012categorical,chen2019joint,shang2021partial}.
Such data involves responses that follow a natural and logical order sequence, where the choices can be meaningfully ranked, though the intervals between them may not be precisely measurable or equal. Examples of ordinal categorical data include Likert scale items (strongly disagree/disagree/neutral\\/agree/strongly agree) or disagree/agree in psychological tests, education levels (high school/\\bachelor's degree/master's degree/doctorate) in educational research, and medical assessment scales (mild/moderate/severe) in clinical studies.

Without confusion, we will use categorical data to represent ordinal categorical data throughout the remainder of this paper. This type of data typically comprises responses from subjects to a set of items and is often represented by an $N\times J$ observed response matrix $R$, where $N$ denotes the number of subjects (individuals) participating in the test and $J$ represents the number of items (questions). In this matrix, $R(i,j)$ signifies the response of subject $i$ to item $j$. When each item has only two choices (e.g., yes/no choices or right/wrong answers), the data is referred to as categorical data with binary responses. In this case, $R\in\{0,1\}^{N\times J}$, where $0$ and $1$ are used to indicate the binary choices. When each item has $M$ ($M\geq3$) choices, the data is known as categorical data with polytomous responses. For such data,we consider $R\in\{0,1,\ldots,M\}^{N\times J}$, where integer $m$ denotes the $m$-th ordered choice for $m=0,1,\ldots,M$, with 0 representing the lowest intensity response. In this work, we do not consider missing data. If the original dataset contains missing responses (typically coded as NaN), we recommend removing those subjects with missing values during data preprocessing. It is evident that polytomous responses are more general than binary responses, and thus, this paper focuses on polytomous responses. A key characteristic of traditional categorical data is that it is often collected only once for each subject, resulting in a single-layer categorical dataset. For such data, learning the latent classes (subgroups) of subjects is often of great interest, as subjects with similar subgroup information typically exhibit similar response patterns. Specifically, the task of learning subjects' latent classes is known as latent class analysis (LCA) \citep{hagenaars2002applied,nylund2018ten}, where the latent classes can be different personalities, abilities, and political ideologies in psychological tests, educational assessments, and political surveys \citep{chen2024spectral}, respectively. Therefore, equipped with LCA, categorical data plays a key role in understanding human behaviors, preferences, and attitudes.

However, despite its prevalence, the single-layer structure of traditional categorical data inherently limits the depth of analysis that can be performed, as it captures only a snapshot of the subjects' responses at a single point in time. In many practical scenarios, the same group of subjects may be assessed multiple times over different time points using the same set of items. This repeated assessment generates multi-layer categorical data, with each layer representing a different time point. For such data, all layers share common subjects and items, creating a longitudinal dataset rich in information. The additional temporal dimension offered by multi-layer categorical data provides a more comprehensive view of subjects' responses, potentially revealing patterns and trends that are not discernible in single-layer categorical data. Figure \ref{SimR3} illustrates a simple example of multi-layer categorical data, where subjects' responses exhibit variation across layers.

\begin{figure}
\centering
\resizebox{\columnwidth}{!}{
{\includegraphics[width=0.343\textwidth]{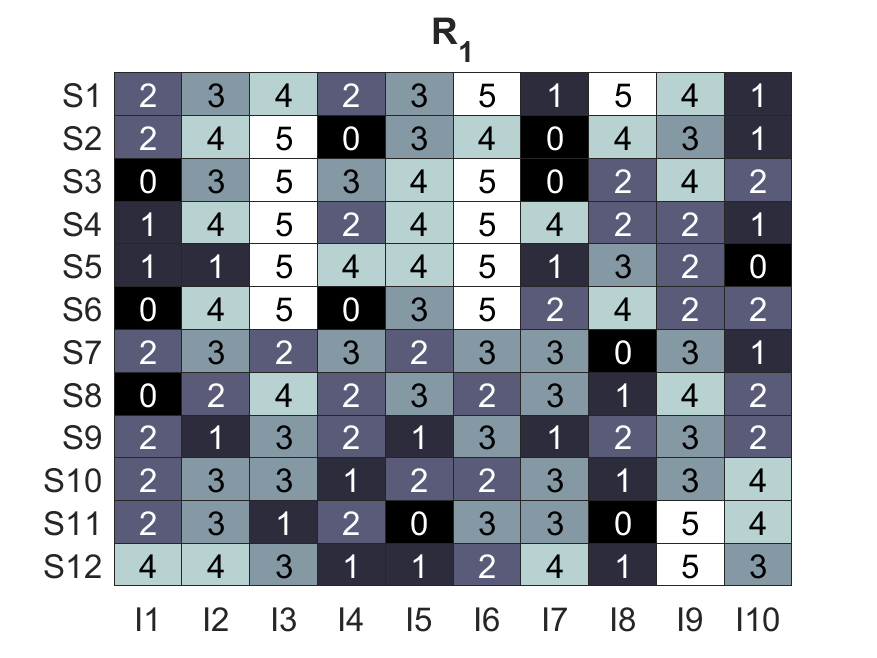}}
{\includegraphics[width=0.343\textwidth]{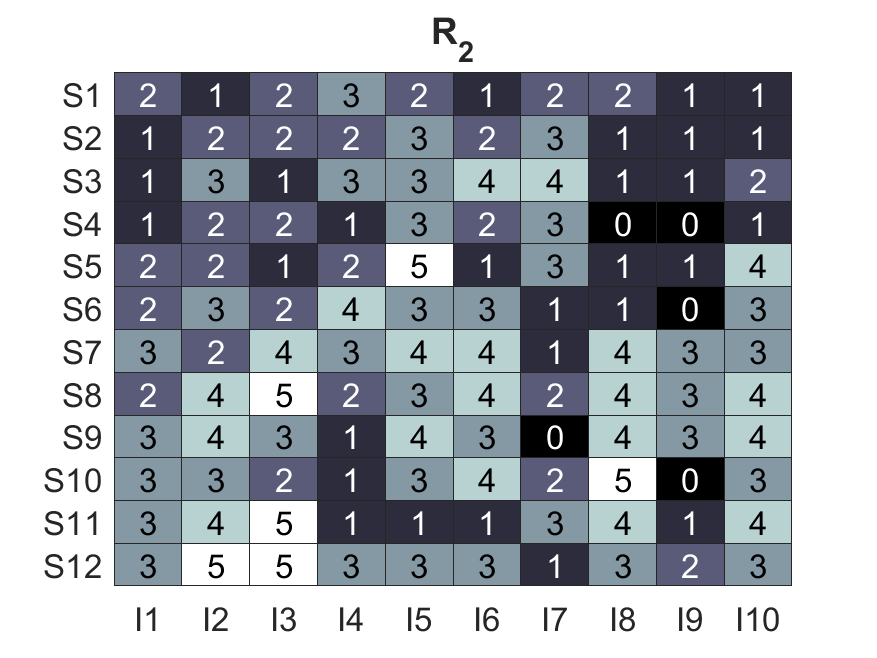}}
{\includegraphics[width=0.343\textwidth]{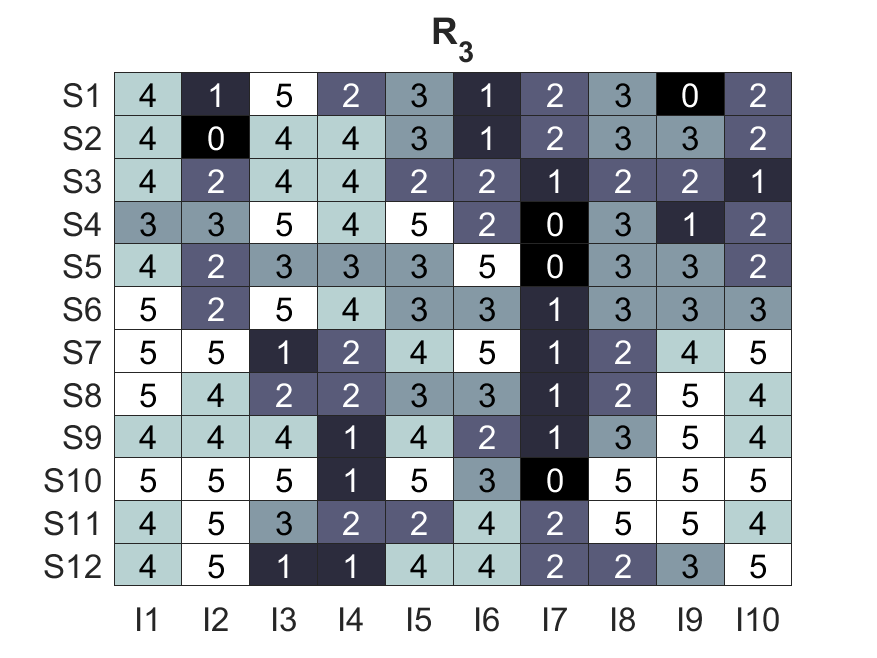}}
}
\caption{A toy example of multi-layer categorical data with 12 subjects, 10 items, 3 layers, and 6 choices per item, where we use S$i$, I$j$, $R_{l}$ to represent the $i$-th subject, the $j$-th item, and the $l$-th observed response matrix, respectively, for $i=1,2,\ldots,12, j=1,2,\ldots,10$, and $l=1,2,3$.}
\label{SimR3} 
\end{figure}

For single-layer categorical data, a popular statistical model for describing its latent class structure is the latent class model (LCM) \citep{goodman1974exploratory}, which operates under the assumption that each subject is assigned to a single latent class. However, the non-overlapping classes property of LCM can be restrictive, potentially failing to fully capture the complexity and diversity of subjects' response patterns. The grade of membership (GoM) model \citep{woodbury1978mathematical} overcomes LCM's limitation and allows each subject to belong to at least one latent class, providing a more powerful modeling capacity than LCM. Existing approaches for learning subjects' latent mixed memberships in single-layer categorical data modeled by the GoM encompass Bayesian inference utilizing MCMC \citep{erosheva2007describing,gormley2009grade,gu2023dimension}, joint maximum likelihood algorithms \citep{sirt_3.13-194}, and spectral methods \citep{chen2024spectral,qing2024finding,qing2025mixed}.

However, all the above methods are inherently limited to single-layer categorical data and cannot be directly applied to learning the common latent classes of subjects in multi-layer categorical data. The additional temporal dimension in multi-layer categorical data introduces new challenges and opportunities for latent class analysis. Recently, to describe the latent class structures in multi-layer categorical data, \cite{MLLCM} proposed a multi-layer LCM which assumes that each subject belongs to only one latent subgroup. \cite{MLLCM} also provides several spectral methods with theoretical guarantees to discover the latent classes of subjects. However, one main limitation of the multi-layer LCM is that it cannot model multi-layer categorical data with a latent mixed membership structure, where each subject can belong to different subgroups with varying weights.

To address the limitations of the aforementioned models and methods for multi-layer categorical data with a latent mixed membership structure, this work introduces the multi-layer grade of membership (multi-layer GoM) model, which extends the traditional GoM model to multi-layer categorical data. This extension enables us to capture the temporal dynamics of subject responses provided by multiple layers and allows subjects to belong to more than one latent class simultaneously. For the estimation of the common mixed memberships, we develop a novel approach, GoM-DSoG, based on a debiased sum of Gram matrices. This approach effectively handles the challenges posed by multi-layer data. We also establish the per-subject convergence rate of GoM-DSoG under the proposed model. Our theoretical and experimental results highlight the benefits of considering more layers than a single layer in the task of grade of membership analysis for categorical data. In particular, if there are too many zeros in the categorical data, more layers should be considered, and our GoM-DSoG is an efficient tool to estimate subjects' common mixed memberships for such multi-layer categorical data. Additionally, an approach for selecting the number of latent classes $K$ in multi-layer categorical data generated from the proposed model is developed. By conducting substantial experimental studies, we empirically verify the theoretical results and show the superiority of the proposed GoM-DSoG method over its competitors and the high accuracy of our method in choosing $K$.

\emph{Notation:} $I_{m\times m}$ and $[m]$ represent the $m\times m$ identity matrix and set $\{1,2,\ldots,m\}$, respectively. For any matrix $H$, $H'$ denotes its transpose, $H(s,:)$ denotes its submatrix for rows in the set $s$, $\|H\|_{1}$ is its the $l_{1}$ norm, $\|H\|_{F}$ represents its Frobenius norm, $\|H\|$ means its spectral norm, $\|H\|_{\infty}$ represents its maximum absolute row sum, $\|H\|_{2\rightarrow\infty}$ denotes its maximum row-wise $l_{2}$ norm, $\lambda_k(H)$ denotes its $k$-th largest eigenvalue in magnitude, and $\mathrm{rank}(H)$ denotes its rank. $e_{i}$ is a vector satisfying $e_{i}(i)=1$ and $e_{i}(j)=0$ for $j\neq i$. $\mathbb{E}(\cdot)$ denotes expectation and $\mathbb{P}(\cdot)$ represents probability.

We organize the remainder of this paper as follows. We introduce our model in Section \ref{sec2}. Our method for fitting proposed model and its theoretical guarantees are presented in Section \ref{sec3}. Section \ref{sec4} proposes an approach for determining $K$. Section \ref{sec5} conducts experimental studies. Section \ref{sec6} concludes the paper.
\section{Multi-layer grade of membership model}\label{sec2}
\subsection{Data preprocessing}\label{sec:preprocessing}
Before introducing the multi-layer GoM model, we first describe the preprocessing steps applied to the original response matrices to ensure that the responses are appropriately encoded for analysis. Let $R^{\mathrm{original}}_l$ denote the $l$-th layer's original response matrix, which may contain missing values (e.g., NaN). We apply the following preprocessing steps:

\begin{enumerate}
    \item Removal of subjects with missing responses: For each subject $i$, if any entry $R^{\mathrm{original}}_l(i,j)$ is missing (e.g., NaN), we remove the entire subject from the analysis. This ensures that the resulting dataset contains only complete responses.
    \item Recoding of response values: For each non-missing response $R^{\mathrm{original}}_l(i,j)$, we recode the values to ensure they lie in the set $\{0, 1, \ldots, M\}$, where $0$ represents the lowest intensity of response. Specifically:
    \begin{itemize}
        \item If the original responses are in $\{0, 1, \ldots, M\}$, we set $R_l(i,j) = R^{\mathrm{original}}_l(i,j)$. Here, $0$ denotes the lowest intensity response.
        \item If the original responses are in $\{1, 2, \ldots, M+1\}$, we set $R_l(i,j) = R^{\mathrm{original}}_l(i,j) - 1$. This transformation shifts the responses to $\{0, 1, \ldots, M\}$, preserving the ordinal nature of the data without loss of information.
    \end{itemize}
\end{enumerate}

After preprocessing, the resulting observed response matrix $R_l$ for the $l$-th layer has elements in $\{0, 1, \ldots, M\}$, where $0$ represents the lowest response intensity (e.g., ``strongly disagree'' on a Likert scale), and not a missing value. This encoding allows us to model the responses using a Binomial distribution as described in Definition \ref{definMLGoM} given later.
\subsection{The model}
Suppose that all subjects belong to $K$ common latent classes, denoted as:
\begin{align}\label{C1K}
\mathcal{C}_{1}, \mathcal{C}_{2}, \ldots, \mathcal{C}_{K}.
\end{align}

We assume that $K$ is known in this paper and we will also propose an approach for selecting $K$ in Section \ref{sec4}. The memberships of subjects belonging to different latent classes can be characterized by a common membership matrix $\Pi \in [0,1]^{N \times K}$ shared across all layers, where:
\begin{align}\label{Pi}
\Pi(i,k) \text{ is the ``weight'' for the partial membership of subject } i \text{~on~}\mathcal{C}_{k}, \quad i \in [N], k \in [K].
\end{align}

By $\Pi$'s definition, for $i\in[N]$, we have $\sum_{k \in [K]} \Pi(i,k) = 1$. Call subject $i$ pure if one entry of $\Pi(i,:)$ equals 1, and mixed otherwise. In this paper, we assume that:
\begin{align}\label{PureSubject}
\text{There is at least one pure subject in each latent class}.
\end{align}

Since Equations (\ref{C1K})–(\ref{PureSubject}) serve as the foundational assumptions of our model, we adopt them as default conditions throughout this paper. Combining Equations (\ref{Pi}) and (\ref{PureSubject}) implies that $\Pi$ is full rank, and its rank is $K$ since we assume  $K\ll\mathrm{min}(N,J)$ in this paper. Let $p_{k}$ denote the index of any pure subject in the $k$-th latent class $\mathcal{C}_{k}$ for $k \in [K]$. Define the pure subject index set $\mathcal{I}$ as $\mathcal{I} = \{p_{1}, p_{2}, \ldots, p_{K}\}$. Without loss of generality, similar to the analysis in \cite{mao2021estimating}, reorder all subjects such that $\Pi(\mathcal{I},:)=I_{K\times K}$.

For each layer $l \in [L]$, let $\Theta_{l} \in [0, M]^{J \times K}$ denote the item parameter matrix of the $l$-th layer categorical data. Our model, a multi-layer extension of the classical GoM model, is formally defined as follows.
\begin{defin}\label{definMLGoM}
(Multi-layer grade of membership model) Let $R_{l}\in \{0,1,\ldots,M\}^{N\times J}$ represent the observed response matrix of the $l$-th layer categorical data for $l\in[L]$. Our multi-layer grade of membership (multi-layer GoM) model for generating the $L$ observed response matrices $\{R_{l}\}_{l=1}^{L}$ is defined as:
\begin{align}\label{MLGoM}
\mathscr{R}_{l} := \Pi \Theta'_{l}, \quad \text{where } R_{l}(i,j) \sim \text{Binomial}(M, \frac{\mathscr{R}_{l}(i,j)}{M}), \quad i \in [N], j \in [J], l \in [L].
\end{align}
\end{defin}
In our model, $\mathscr{R}_{l}(i,j)$ represents the expected choice of subject $i$ to item $j$ by the property of Binomial distribution, i.e., $\mathbb{E}(R_{l}(i,j))=\mathscr{R}_{l}(i,j)$ under the multi-layer GoM. Equation (\ref{MLGoM}) gives $\mathscr{R}_{l}(i,j)=\sum_{k\in[K]}\Pi(i,k)\Theta_{l}(j,k)$, which means that the response of subject $i$ to item $j$ tends to be larger for larger $\Theta_{l}(j,k)$. For any two distinct latent classes $k_{1}$ and $k_{2}$, if $\Theta_{l}(j,k_{1})>\Theta_{l}(j,k_{2})$, $\mathscr{R}_{l}(i,j)=\sum_{k\in[K]}\Pi(i,k)\Theta_{l}(j,k)$ implies that subjects predominantly affiliated with latent class $k_{1}$ (i.e., those with higher $\Pi(i,k_{1})$) are more likely to choose higher responses for item $j$ at layer $l$, while subjects in latent class $k_{2}$ exhibit weaker or lower responses for the same item. Meanwhile, the choice of modeling categorical responses as Binomial random variables is motivated by the ordered nature of the data. Unlike unordered categorical data (e.g., multiple-choice questions with no inherent order), ordinal categorical data (e.g., Likert scales) can be meaningfully interpreted as counts of ``successes" or levels of agreement. The Binomial distribution provides a flexible framework to model such ordered responses, where the probability of each response value is determined by the latent membership matrix $\Pi$ and item parameter matrix $\Theta_{l}$ for the $l$-th layer. For convenience, call $\mathscr{R}_{l}$ the $l$-th population response matrix since $\mathbb{E}(R_{l})=\mathscr{R}_{l}$ for $l\in[L]$ under the Binomial distribution. When all subjects are pure, our model degenerates the multi-layer LCM introduced in \citep{MLLCM}. When the number of layers $L$ equals 1, the multi-layer GoM defined by Equation (\ref{MLGoM}) simplifies to the GoM model for single-layer categorical data considered in \citep{qing2024finding}. Furthermore, when both $L=1$ and $M=1$, the proposed model reduces to the GoM model studied in \citep{chen2024spectral}.

Equation (\ref{MLGoM}) necessitates that the elements of $\Theta_{l}$ range within $[0,M]$ for $l \in [L]$ because $\frac{\mathscr{R}_{l}(i,j)}{M}$ represents a probability within $[0,1]$, and $\Pi$ satisfies Equation (\ref{Pi}). By the definition of multi-layer GoM, at the $l$-th layer, the probability that subject $i$ selects the $m$-th choice in item $j$ equals
\begin{align}\label{SubProbability}
\mathbb{P}(R_{l}(i,j)=m)=\frac{M!}{m!(M-m)!}(\frac{\mathscr{R}_{l}(i,j)}{M})^{m}(1-\frac{\mathscr{R}_{l}(i,j)}{M})^{M-m},\qquad m=0,1,\ldots,M,
\end{align}
where $\mathbb{P}(R_{l}(i,j)=0)$ represents the probability that subject $i$ tends to have the lowest intensity response to item $j$. By Equation (\ref{MLGoM}), it is clear that the membership matrix $\Pi$ is shared by all layers while the probability that subject $i$ selects the $m$-th choice in item $j$ changes between layers in our model. Such setting is common for the task of community detection in multi-layer networks \citep{paul2020spectral,lei2020consistent,lei2023bias,xu2023covariate,su2024spectral,qing2025community,qing2025discovering}. Using Equation (\ref{MLGoM}), our multi-layer GoM can generate $\{R_{l}\}^{L}_{l=1}$ with a latent true membership matrix $\Pi$ and item parameter matrices $\{\Theta_{l}\}^{L}_{l=1}$. This paper aims to recover $\Pi$ and $\{\Theta_{l}\}^{L}_{l=1}$ given the $L$ observed response matrices $\{R_{l}\}^{L}_{l=1}$.
\subsection{Response intensity scaling}
In real-world multi-layer categorical data with polytomous responses, some lowest-intensity responses (0s) exist. Specifically, for multi-layer categorical data with binary responses when $M=1$, we have $R_{l}\in\{0,1\}^{N\times J}$ for $l\in[L]$, which may result in numerous zeros in the data. A large number of zeros in the dataset poses a significant challenge for the task of grade of membership analysis. Therefore, it is crucial to control the number of zeros in the multi-layer categorical data. Define $\rho=\mathrm{max}_{l\in[L]}\mathrm{max}_{j\in[J], k\in[K]}\Theta_{l}(j,k)$ and $B_{l}=\frac{\Theta_{l}}{\rho}$ for $l\in[L]$. Given that $\Theta_{l}\in[0,M]^{J\times K}$, it follows that $\rho\in(0,M]$, the maximum element of $B_{l}$ is no larger than 1, and $\mathscr{R}_{l}(i,j)\leq\rho$ for $i\in[N], j\in[J], l\in[L]$. According to Equation (\ref{SubProbability}), we have:
\begin{align*}
\mathbb{P}(R_{l}(i,j)=0)=(1-\frac{\mathscr{R}_{l}(i,j)}{M})^{M}=(1-\frac{\rho\Pi(i,:)B'_{l}(j,:)}{M})^{M},
\end{align*}
which indicates that increasing $\rho$ decreases the probability of generating the lowest-intensity responses across all layers. Consequently, $\rho$ serves as a controlling factor for the overall response intensity of the data, thus justifying its designation as the response intensity parameter. To thoroughly examine the performance of the proposed method under varying levels of data's response intensity, we will permit $\rho$ to approach zero and incorporate it into the error bound for further analysis.

\begin{Ex}
Consider a simple case where $M=4$, and the expected response $\mathscr{R}_{l}(i,j)$ takes values from $\{1,1.5,2,2.5,3\}$. Table \ref{ProbabilityR} illustrates how the probability of each response value changes based on $\mathscr{R}_{l}(i,j)$. The results show that as $\mathscr{R}_{l}(i,j)$ increases, the probability of a lowest intensity response (0) decreases monotonically, while the probability of higher intensity response (e.g., 3 or 4) increases. This pattern aligns with theoretical expectations, as $\mathscr{R}_{l}(i,j)$ represents the expected value of $R_{l}(i,j)$ under the proposed model. Intuitively, higher values of $\mathscr{R}_{l}(i,j)$ lead to fewer zeros and shift the response distribution toward higher-valued choices. Consequently, by adjusting the response intensity parameter $\rho$, we can effectively control the overall response intensity of the data.

\begin{table}[h!]
\footnotesize
	\centering
	\caption{Probability distribution of responses under varying population response values $\mathscr{R}_{l}(i,j)$ when $M=4$.}
	\label{ProbabilityR}
\resizebox{\columnwidth}{!}{
	\begin{tabular}{|c|c|c|c|c|c|cccccc}
\hline
Population response values&$\mathbb{P}(R_{l}(i,j)=0)$&$\mathbb{P}(R_{l}(i,j)=1)$&$\mathbb{P}(R_{l}(i,j)=2)$&$\mathbb{P}(R_{l}(i,j)=3)$&$\mathbb{P}(R_{l}(i,j)=4)$\\
\hline
$\mathscr{R}_{l}(i,j)=1$&0.3164&0.4219&0.2109&0.0469&0.0039\\
\hline
$\mathscr{R}_{l}(i,j)=1.5$&0.1526&0.3662&0.3296&0.1318&0.0198\\
\hline
$\mathscr{R}_{l}(i,j)=2$&0.0625&0.2500&0.3750&0.2500&0.0625\\
\hline
$\mathscr{R}_{l}(i,j)=2.5$&0.0198&0.1318&0.3296&0.3662&0.1526\\
\hline
$\mathscr{R}_{l}(i,j)=3$&0.0039&0.0469&0.2109&0.4219&0.3164\\
\hline
\end{tabular}
}
\end{table}
\end{Ex}
\section{Debiased spectral method and its theoretical guarantees}\label{sec3}
We propose an efficient method for $\Pi$ and $\{\Theta_{l}\}^{L}_{l=1}$ in this section. Subsequently, we establish its theoretical guarantees under the multi-layer GoM model.
\subsection{Debiased spectral method}
Suppose that the population response matrices $\{\mathscr{R}_{l}\}^{L}_{l=1}$ are observed, and our objective is to accurately recover $\Pi$ and $\{\Theta_{l}\}^{L}_{l=1}$ from $\{\mathscr{R}_{l}\}^{L}_{l=1}$. Define $\mathcal{S}=\sum_{l=1}^{L}\mathscr{R}_{l}\mathscr{R}'_{l}$, where $\mathcal{S}$ is a symmetric matrix and it represents an aggregation of $L$ Gram matrices $\{\mathscr{R}_{l}\mathscr{R}'_{l}\}^{L}_{l=1}$. The following proposition guarantees the identifiability of our multi-layer GoM model.

\begin{prop}\label{ID}
(Identifiability). Assume that $\mathrm{rank}(\sum_{l=1}^{L}\Theta'_{l}\Theta_{l})=K$. Then, our multi-layer GoM model is identifiable: For any $(\Pi, \{\Theta_{l}\}^{L}_{l=1})$ and $(\tilde{\Pi}, \{\tilde{\Theta}_{l}\}^{L}_{l=1})$, if $\Pi\Theta'_{l}=\tilde{\Pi}\tilde{\Theta}'_{l}$ for $l\in[L]$, we have $\tilde{\Pi}=\Pi\mathcal{P}$ and $\tilde{\Theta}_{l}=\Theta_{l}\mathcal{P}$ for $l\in[L]$, where $\mathcal{P}$ is a permutation matrix.
\end{prop}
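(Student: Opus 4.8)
The plan is to pass from the response matrices to the aggregated Gram matrix $\mathcal{S}=\sum_{l=1}^{L}\mathscr{R}_{l}\mathscr{R}'_{l}$, which is the natural object since the hypothesis is a rank condition on $\sum_{l}\Theta'_{l}\Theta_{l}$, and since $\mathcal{S}$ is invariant under the two parameterizations because $\mathscr{R}_{l}=\Pi\Theta'_{l}=\tilde{\Pi}\tilde{\Theta}'_{l}$. Substituting the model gives $\mathcal{S}=\Pi G\Pi'=\tilde{\Pi}\tilde{G}\tilde{\Pi}'$, where $G=\sum_{l=1}^{L}\Theta'_{l}\Theta_{l}$ and $\tilde{G}=\sum_{l=1}^{L}\tilde{\Theta}'_{l}\tilde{\Theta}_{l}$ are $K\times K$. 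By hypothesis $\mathrm{rank}(G)=K$, and since $\Pi$ has full column rank $K$ (a consequence of (\ref{PureSubject})), $\mathcal{S}$ has rank $K$ and its column space coincides with that of $\Pi$. Reading the same identity from the right shows the column space of $\mathcal{S}$ is contained in that of $\tilde{\Pi}$; equality of ranks then forces $\tilde{\Pi}$ to have full column rank $K$ and to share the column space of $\Pi$. Consequently there is an invertible matrix $Q\in\mathbb{R}^{K\times K}$ with $\tilde{\Pi}=\Pi Q$.

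Next I would show $Q$ is a permutation matrix using only the simplex structure of $\Pi$ and $\tilde{\Pi}$. Because each row of $\Pi$ and of $\tilde{\Pi}$ sums to one, $\Pi Q\mathbf{1}_{K}=\Pi\mathbf{1}_{K}$, and left-invertibility of $\Pi$ yields $Q\mathbf{1}_{K}=\mathbf{1}_{K}$, i.e.\ the rows of $Q$ sum to one. Restricting $\tilde{\Pi}=\Pi Q$ to the pure-subject rows $\mathcal{I}$ and using $\Pi(\mathcal{I},:)=I_{K\times K}$ gives $Q=\tilde{\Pi}(\mathcal{I},:)\in[0,1]^{K\times K}$, so $Q$ is entrywise nonnegative; together with the unit row sums, $Q$ is row-stochastic. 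Applying the identical argument to $\Pi=\tilde{\Pi}Q^{-1}$ with the pure subjects of $\tilde{\Pi}$ shows $Q^{-1}$ is also row-stochastic, hence nonnegative.

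I would then invoke the classical fact that a nonnegative matrix whose inverse is nonnegative must be monomial (a permutation times a positive diagonal); the unit-row-sum constraints on both $Q$ and $Q^{-1}$ force the diagonal to be the identity, so $Q=\mathcal{P}$ for a permutation matrix $\mathcal{P}$, giving $\tilde{\Pi}=\Pi\mathcal{P}$. Finally, substituting this back into $\Pi\Theta'_{l}=\tilde{\Pi}\tilde{\Theta}'_{l}=\Pi\mathcal{P}\tilde{\Theta}'_{l}$ and left-multiplying by $(\Pi'\Pi)^{-1}\Pi'$ (valid since $\Pi$ has full column rank) gives $\Theta'_{l}=\mathcal{P}\tilde{\Theta}'_{l}$, i.e.\ $\tilde{\Theta}_{l}=\Theta_{l}\mathcal{P}$ for every $l\in[L]$, as claimed.

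The main obstacle is the second step: turning the algebraic relation $\tilde{\Pi}=\Pi Q$ into the conclusion that $Q$ is a permutation. This is where both the nonnegativity/row-sum constraints and the existence of pure subjects for each of $\Pi$ and $\tilde{\Pi}$ are essential, since a one-sided pure-subject argument only yields that $Q$ is row-stochastic, which is insufficient (an invertible row-stochastic matrix need not be a permutation). The rank assumption $\mathrm{rank}(\sum_{l}\Theta'_{l}\Theta_{l})=K$ enters precisely to guarantee $\mathrm{rank}(\mathcal{S})=K$, so that the two column spaces can be identified; if it failed, $\Pi$ would not be recoverable from $\mathcal{S}$ and the factorization would no longer be unique up to relabeling.
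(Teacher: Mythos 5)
Your proof is correct and follows the same overall strategy as the paper: pass to the aggregated Gram matrix $\mathcal{S}=\Pi(\sum_{l=1}^{L}\Theta'_{l}\Theta_{l})\Pi'$, identify $\Pi$ up to permutation from the full-rank and pure-subject structure, and then recover $\tilde{\Theta}_{l}=\Theta_{l}\mathcal{P}$ by left-inverting $\Pi$. The one difference is that the paper delegates the key middle step to the first result of Theorem 2.1 in \cite{mao2021estimating}, whereas you prove it directly -- equating column spaces to get $\tilde{\Pi}=\Pi Q$, then showing both $Q$ and $Q^{-1}$ are row-stochastic so that $Q$ must be a permutation; this self-contained argument is sound, with the same implicit requirement as the cited theorem that $\tilde{\Pi}$ also lies in the model's parameter space (entries in $[0,1]$, unit row sums, and a pure subject in each class).
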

The identifiability proposed in Proposition \ref{ID} guarantees that our multi-layer GoM is well-defined since the model parameters $(\Pi, \{\Theta_{l}\}^{L}_{l=1})$ can be uniquely recovered up to a permutation. The intuition of the proposed method comes from the following lemma.
\begin{lem}\label{IS}
When $\mathrm{rank}(\sum_{l=1}^{L}\Theta'_{l}\Theta_{l})=K$, let $\mathcal{S}=U\Lambda U'$ denote the leading $K$ eigendecomposition of $\mathcal{S}$, where $U'U=I_{K\times K}$. Then, we have:
\begin{itemize}
  \item $U=\Pi U(\mathcal{I},:)$,
  \item $\Theta_{l}=\mathscr{R}'_{l}\Pi(\Pi'\Pi)^{-1}$ for $l\in[L]$.
\end{itemize}
\end{lem}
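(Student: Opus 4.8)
The plan is to exploit the factorization $\mathcal{S}=\Pi\big(\sum_{l=1}^{L}\Theta'_{l}\Theta_{l}\big)\Pi'$, which follows immediately by substituting $\mathscr{R}_{l}=\Pi\Theta'_{l}$ into the definition $\mathcal{S}=\sum_{l=1}^{L}\mathscr{R}_{l}\mathscr{R}'_{l}$. Writing $G:=\sum_{l=1}^{L}\Theta'_{l}\Theta_{l}$, each summand $\Theta'_{l}\Theta_{l}$ is symmetric positive semidefinite, so $G$ is symmetric positive semidefinite; the hypothesis $\mathrm{rank}(G)=K$ then forces $G$ to be positive definite, and in particular invertible. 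This invertibility of the $K\times K$ core $G$ is the structural fact that everything else rests on.

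For the first bullet I would argue at the level of column spaces. Since $\Pi$ has full column rank $K$ (guaranteed by (\ref{PureSubject})) and $G$ is invertible, the product $\mathcal{S}=\Pi G\Pi'$ again has rank $K$, and $\mathrm{col}(\mathcal{S})=\mathrm{col}(\Pi)$: the inclusion $\mathrm{col}(\mathcal{S})\subseteq\mathrm{col}(\Pi)$ is obvious from $\mathcal{S}=\Pi(G\Pi')$, and equality follows because both subspaces have dimension $K$. As $\mathcal{S}$ is symmetric positive semidefinite of rank $K$, its leading-$K$ eigendecomposition $\mathcal{S}=U\Lambda U'$ has $\Lambda$ collecting the $K$ strictly positive eigenvalues while the columns of $U$ form an orthonormal basis of $\mathrm{col}(\mathcal{S})=\mathrm{col}(\Pi)$. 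Hence there is a (necessarily invertible) $K\times K$ matrix $X$ with $U=\Pi X$. Evaluating this identity on the pure-subject rows $\mathcal{I}$ and using $\Pi(\mathcal{I},:)=I_{K\times K}$ gives $U(\mathcal{I},:)=\Pi(\mathcal{I},:)X=X$, so $X=U(\mathcal{I},:)$ and therefore $U=\Pi U(\mathcal{I},:)$, as claimed.

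The second bullet is a direct algebraic inversion that does not even use $\mathcal{S}$. Transposing $\mathscr{R}_{l}=\Pi\Theta'_{l}$ yields $\mathscr{R}'_{l}=\Theta_{l}\Pi'$; right-multiplying by $\Pi(\Pi'\Pi)^{-1}$, where $\Pi'\Pi$ is invertible precisely because $\Pi$ has full column rank, gives $\mathscr{R}'_{l}\Pi(\Pi'\Pi)^{-1}=\Theta_{l}\Pi'\Pi(\Pi'\Pi)^{-1}=\Theta_{l}$ for each $l\in[L]$.

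I do not anticipate a serious obstacle; the only point requiring genuine care is the column-space identity $\mathrm{col}(\mathcal{S})=\mathrm{col}(\Pi)$, where one must confirm that sandwiching the invertible $G$ between $\Pi$ and $\Pi'$ neither enlarges the column space beyond $\mathrm{col}(\Pi)$ nor collapses the rank below $K$. This is exactly where both the invertibility of $G$ (delivered by the rank assumption) and the full column rank of $\Pi$ (delivered by the pure-subject condition) are indispensable. Once this is secured, pinning down $X$ through the normalization $\Pi(\mathcal{I},:)=I_{K\times K}$ and inverting for $\Theta_{l}$ are both routine.
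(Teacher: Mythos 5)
Your proof is correct and follows essentially the same route as the paper: both arguments reduce the first bullet to showing $U=\Pi X$ for some $K\times K$ matrix $X$ and then pin down $X=U(\mathcal{I},:)$ using $\Pi(\mathcal{I},:)=I_{K\times K}$, and the second bullet is the same one-line inversion of $\mathscr{R}_{l}=\Pi\Theta'_{l}$. The only difference is how $U=\Pi X$ is obtained --- the paper writes it explicitly as $U=\mathcal{S}U\Lambda^{-1}=\Pi\bigl((\sum_{l=1}^{L}\Theta'_{l}\Theta_{l})\Pi'U\Lambda^{-1}\bigr)$, whereas you argue via $\mathrm{col}(\mathcal{S})=\mathrm{col}(\Pi)$ and rank counting; both are valid.
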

Note that Lemma \ref{IS} is directly linked to the pure subject assumption stated in Equation (\ref{PureSubject}), as it necessitates the presence of the pure subject index set $\mathcal{I}$. Given that the summation of each row of $\Pi$ equals 1 and $\Pi$ satisfies Equation (\ref{PureSubject}), the form $U=\Pi U(\mathcal{I},:)$ is known as the Ideal Simplex (IS), where $U(\mathcal{I},:)$'s $K$ rows are the $K$ vertexes in the simplex structure. Figure \ref{PlotIS} shows the IS structure when $K=3$. It is clear to observe that the rows corresponding to pure subjects form the vertices of the simplex, while those representing mixed subjects lie within its interior. A similar Ideal Simplex is also discovered in the field of mixed membership community detection in network science \citep{mao2021estimating,jin2024mixed,qing2024bipartite}, grade of membership analysis for single-layer categorical data \citep{chen2024spectral,qing2024finding}, and topic modeling \citep{klopp2023assigning,ke2024using}.

\begin{figure}
\centering
{\includegraphics[width=0.3\textwidth]{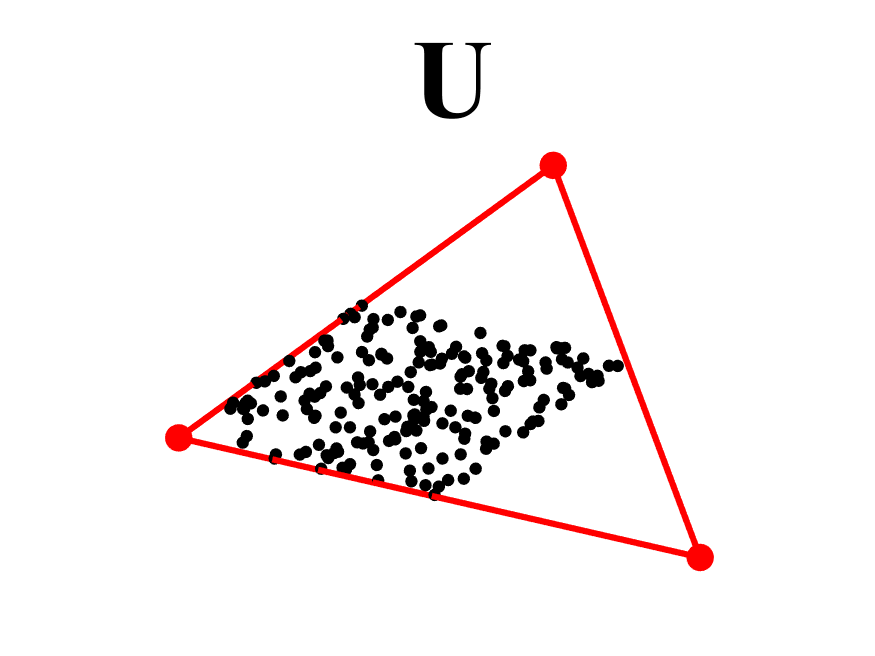}}
\caption{Illustration of the Ideal Simplex geometry of the eigenvector matrix $U$ with $K=3$. Each point denotes a row vector of $U$, red points denote rows corresponding to pure subjects (because $U(i,:)=U(j,:)$ for any two pure subjects $i$ and $j$ from the same latent class, so each red point may denote many pure subjects), each black points corresponds to mixed subjects, and the red triangle denotes the Ideal Simplex. For visualization, these points have been projected from $\mathbb{R}^{3}$ to $\mathbb{R}^{2}$.}
\label{PlotIS} 
\end{figure}

Since $U(\mathcal{I},:)$ is non-singular, we have $\Pi = U U^{-1}(\mathcal{I},:)$ by $U = \Pi U(\mathcal{I},:)$. This implies that $\Pi$ can be exactly recovered if we can obtain  $U(\mathcal{I},:)$ from the $N \times K$ eigenvector matrix $U$. Similar to previous works related to the IS structure, the \emph{successive projection algorithm} (SPA) \citep{araujo2001successive,gillis2013fast,gillis2015semidefinite} can be employed to precisely identify the $K$ vertices $U(\mathcal{I},:)$ within $U$, leveraging the simplex structure $U = \Pi U(\mathcal{I},:)$. SPA is a widely used technique for vertex hunting in simplex structure \citep{mao2021estimating,jin2024mixed,qing2024bipartite, chen2024spectral,qing2024finding,agterberg2024estimating,klopp2023assigning,ke2024using}. Once we obtain $\Pi$, we can recover $\{\Theta_{l}\}^{L}_{l=1}$ using the second result of Lemma \ref{IS}. To adapt this idea to the real case, we set $H = \mathrm{max}(0, UU^{-1}(\mathcal{I},:)) \equiv\Pi$, ensuring that $\Pi(i,:) = \frac{H(i,:)}{\|H(i,:)\|_{1}}$ and $\Theta_{l} = \mathscr{R}'_{l}\Pi(\Pi'\Pi)^{-1}$ for $i\in[N], l\in [L]$.

For real data, the $L$ response matrices $\{R_{l}\}^{L}_{l=1}$ are observed, while their expectations $\{\mathscr{R}_{l}\}^{L}_{l=1}$ are not available. For each $l \in [L]$, let $D$ be an $N \times N$ diagonal matrix with $D_{l}(i,i)=\sum_{j \in [J]} R_{l}^{2}(i,j)$ for $i \in [N]$. Define $S = \sum_{l \in [L]} (R_{l}R_{l}' - D_{l})$, where $S$ is known as the debiased sum of Gram matrices \citep{su2024spectral}. Let $\hat{U}\hat{\Lambda}\hat{U}'$ be the leading $K$ eigendecomposition of $S$, where $\hat{U}'\hat{U} = I_{K \times K}$. It is known that the $N \times N$ symmetric matrix $S$ is a debiased estimator of $\mathcal{S}$, while $\sum_{l \in [L]} R_{l}R_{l}'$ is a biased estimator, following similar analysis as in \citep{lei2023bias, su2024spectral}. Therefore, let $\hat{\mathcal{I}}$ represent the estimated pure subjects' index set found by employing SPA to $\hat{U}$ with $K$ vertices. Define $\hat{H} = \max(0, \hat{U}\hat{U}^{-1}(\hat{\mathcal{I}},:))$, let $\hat{\Pi}$ be an $N \times K$ matrix such that $\hat{\Pi}(i,:) = \frac{\hat{H}(i,:)}{\|\hat{H}(i,:)\|_{1}}$ for $i \in [N]$. Note that the elements of $R_{l}'\hat{\Pi}(\hat{\Pi}'\hat{\Pi})^{-1}$ may be negative or larger than $M$ while $\Theta_{l}\in[0,M]^{J\times K}$, we define $\hat{\Theta}_{l} = \mathrm{min}(M,\mathrm{max}(0,R_{l}'\hat{\Pi}(\hat{\Pi}'\hat{\Pi})^{-1}))$ for $l \in [L]$. We see that $\hat{\Pi}$ and $\{\hat{\Theta}_{l}\}^{L}_{l=1}$ are good estimations of $\Pi$ and $\{\Theta_{l}\}^{L}_{l=1}$, respectively. The above analysis is summarized in Algorithm \ref{alg:GoM-DSoG}. Recall that the membership scores remain constant across all layers. If we allow the membership scores to vary, Lemma \ref{IS} no longer holds. Consequently, Algorithm \ref{alg:GoM-DSoG} becomes inapplicable for membership score estimation. For alternative approaches to this scenario, readers may consult methodologies developed in \citep{pensky2019spectral,lin2024dynamic}. This paper exclusively addresses the case where $\Pi$ remains consistent across all layers while analyzing the time-varying membership scores is reserved for future research. Additionally, we employ spectral method rather than likelihood-based approaches for estimating membership scores, as spectral method exhibits superior computational efficiency \citep{lei2023bias,chen2024spectral}.

\begin{algorithm}
\caption{\textbf{\textbf{G}}rade \textbf{o}f \textbf{M}embership analysis via the \textbf{D}ebiased \textbf{S}um \textbf{o}f \textbf{G}ram matrices (\textbf{GoM-DSoG})}
\label{alg:GoM-DSoG}
\begin{algorithmic}[1]
\Require The $L$ observed response matrices $R_{1}, R_{2}, \ldots, R_{L}$ and the number of latent classes $K$, where $R_{l}\in\{0,1,2,\ldots, M\}^{N\times J}$ for $l\in[L]$.
\Ensure Estimated membership matrix $\hat{\Pi}$ and estimated item parameter matrices $\{\hat{\Theta}_{l}\}^{L}_{l=1}$.
\State Calculate $S = \sum_{l\in[L]}(R_{l}R_{l}' - D_{l})$, where $D_{l}$ is a diagonal matrix with $D_{l}(i,i) = \sum_{j\in[J]}R_{l}^{2}(i,j)$ for $i\in[N], l\in[L]$.
\State Compute $\hat{U}\hat{\Lambda}\hat{U}'$, the leading $K$ eigendecomposition of $S$.
\State Apply SPA to $\hat{U}$ assuming that there are $K$ vertices to get the estimated pure subjects' index set $\hat{\mathcal{I}}$.
\State Set $\hat{H} = \mathrm{max}(0, \hat{U}\hat{U}^{-1}(\hat{\mathcal{I}},:))$.
\State For $i\in[N]$, set $\hat{\Pi}(i,:) = \frac{\hat{H}(i,:)}{\|\hat{H}(i,:)\|_{1}}$.
\State For $l\in[L]$, set $\hat{\Theta}_{l} = \mathrm{min}(M,\mathrm{max}(0,R_{l}'\hat{\Pi}(\hat{\Pi}'\hat{\Pi})^{-1}))$.
\end{algorithmic}
\end{algorithm}
\begin{rem}
In this paper, our multi-layer GoM model employs a Binomial distribution in Equation (\ref{MLGoM}) for analyzing multi-layer ordinal categorical data. Notably, extending the community detection approaches for weighted networks proposed in \citep{qing2023community,qing2024bipartite}, our framework allows $\mathcal{F}$ to represent any distribution whose first-moment exists and satisfies $\mathbb{E}(R_{l}(i,j))=\mathscr{R}_{l}(i,j)$ with $\mathscr{R}_{l}:=\Pi\Theta'_{l}$ for $i\in[N], j\in[J], l\in[L]$.This generalization enables the modeling of multi-layer weighted data. Below we present illustrative examples:
\begin{itemize}
  \item \textbf{Normal distribution}: Let $\mathcal{F}$ be a Normal distribution such that $R_{l}(i,j)\sim\mathrm{Normal}(\mathscr{R}_{l}(i,j),\sigma^{2})$, where $\sigma^{2}$ denotes variance. For this distribution, $R_{l}(i,j)$ can be any real value, $\Theta_{l}$'s elements can be negative, and $\rho\in(0,+\infty)$.
  \item \textbf{Poisson distribution}: Let $\mathcal{F}$ be a Poisson distribution such that $R_{l}(i,j)\sim\mathrm{Poisson}(\mathscr{R}_{l}(i,j))$. Here, $R_{l}(i,j)$ is a nonnegative integer, $\Theta_{l}$'s elements should be nonnegative, and $\rho\in(0,+\infty)$.
  \item \textbf{Uniform distribution}: Let $\mathcal{F}$ be a Uniform distribution such that $R_{l}(i,j)\sim\mathrm{Uniform}(0,2\mathscr{R}_{l}(i,j))$. For this distribution, $R_{l}(i,j)$ is  positive value that is no larger than $2\rho$, $\Theta_{l}$'s elements should be nonnegative, and $\rho\in(0,+\infty)$.
  \item \textbf{Discrete signed distribution}: Let $\mathcal{F}$ be a discrete distribution such that $\mathbb{P}(R_{l}(i,j)=1)=\frac{1+\mathscr{R}_{l}(i,j)}{2}$ and $\mathbb{P}(R_{l}(i,j)=-1)=\frac{1-\mathscr{R}_{l}(i,j)}{2}$. For this case, $R_{l}(i,j)$ takes values in $\{-1,1\}$ (i.e., signed responses), $\Theta_{l}$'s elements should range in $[-1,1]$, and $\rho\in(0,1]$.
\end{itemize}

$\mathcal{F}$ can also be any other distribution as long as $\mathbb{E}(R_{l})=\mathscr{R}_{l}$ holds under distribution $\mathcal{F}$. Crucially, Lemma \ref{IS} remains valid under these generalized distributions, enabling consistent estimation of $\Pi$ and $\{\Theta_{l}\}^{L}_{l=1}$ using our GoM-DSoG algorithm. When different distributions $\mathcal{F}$ are considered, our multi-layer GoM can model multi-layer weighed categorical data because $R_{l}$'s elements have more choices now. While this extension broadens applicability to multi-layer weighted data through diverse response value domains, our current focus remains on multi-layer ordinal categorical data using the Binomial distribution. The theoretical investigation of GoM-DSoG's performance under different distributions $\mathcal{F}$ is of independent interest.

\end{rem}
\subsection{Consistency results}
This subsection presents the theoretical findings exploring the advantages of multi-layer categorical data over single-layer categorical data for grade of membership analysis. We investigate the asymptotic consistency of GoM-DSoG under the proposed multi-layer GoM model across different intensity levels. Specifically, we assume that $N$, $J$, and $L$ can approach infinity in the asymptotic setting, with no assumed relationship among their growth rates. Before presenting the main result, we outline some assumptions.
\begin{assum}\label{Assum1}
$\rho^{2}NJL\gg\mathrm{log}(N+J+L)$.
\end{assum}
Since $N$, $J$, and $L$ can grow to infinity in our analysis, Assumption \ref{Assum1} is mild, as it implies that $\rho$ should be significantly larger than $\sqrt{\frac{\log(N+J+L)}{NJL}}$ (therefore, $\rho$ can be sufficiently small for large values of $N$, $J$, and $L$). This assumption imposes a lower bound on $\rho$, indicating that there cannot be too many zeros in the multi-layer categorical data. It is worth noting that when $J=O(N)$, our Assumption \ref{Assum1} is in agreement with the sparsity requirement stated in Theorem 1 of \cite{lei2023bias}. This alignment is expected since both our GoM-DSoG and the Bias-adjusted SoS algorithm provided in \cite{lei2023bias} are debiased spectral clustering methods.
\begin{assum}\label{Assum11} $|\lambda_{K}(\sum_{l\in[L]}B'_{l}B_{l})|\geq c_{1}JL$ for some positive constant $c_{1}$.
\end{assum}
This assumption requires a linear growth of $|\lambda_{K}(\sum_{l\in[L]}B'_{l}B_{l})|$ concerning the number of layers $L$ and it serves a similar purpose in our theoretical analysis as the second point of Assumption 1 in \cite{lei2023bias}, Assumption 2 in \cite{su2024spectral}, and the last point of Condition 2 in \cite{chen2024spectral}. The rationality of this assumption can be illustrated by considering a simple case where $B_{1}=B_{2}=\ldots=B_{L}$. Under this scenario, it follows that $|\lambda_{K}(\sum_{l\in[L]}B'_{l}B_{l})|=L|\lambda_{K}(B'_{1}B_{1})|=O(JL)$, which demonstrates the validity of the assumption in a straightforward manner.
\begin{con}\label{condition}
$K=O(1)$ and $\lambda_{K}(\Pi'\Pi)=O(\frac{N}{K})$.
\end{con}
This condition serves a similar purpose to the first point of Assumption 1 in \cite{lei2023bias}, Condition 2 in \cite{chen2024spectral}, Assumption 1 in \cite{su2024spectral}, and so on. It is necessary to simplify our theoretical analysis.

Theorem \ref{MAIN} presents our main result for the task of grade membership analysis. It provides the per-node error rate of the proposed algorithm. To obtain Theorem \ref{MAIN}, we first bound $\|S-\mathcal{S}\|_{\infty}$ using the Bernstein inequality provided in \cite{tropp2012user}. Subsequently, we utilize this bound to establish a bound for the row-wise eigenvector error $\|\hat{U}\hat{U}'-UU'\|_{2\rightarrow\infty}$ based on Theorem 4.2 of \cite{cape2019the}.
\begin{thm}\label{MAIN}
Consider a multi-layer GoM parameterized by $\{\Pi,\{\Theta_{l}\}^{L}_{l=1}\}$. When Assumption \ref{Assum1}, Assumption \ref{Assum11}, and Condition \ref{condition} hold, with high probability, we have
\begin{align*}
\max_{i\in[N]}\|e'_{i}(\hat{\Pi}-\Pi\mathcal{P})\|_{1}=O(\sqrt{\frac{\log(N+J+L)}{\rho^{2}NJL}})+O(\frac{1}{N}).
\end{align*}
\end{thm}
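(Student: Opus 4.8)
\emph{Proof proposal.} The plan is to carry out the two-step program indicated above: quantify the deviation of the debiased Gram sum $S$ from its population target $\mathcal{S}$, convert this into a row-wise control of the leading eigenvectors through a two-to-infinity perturbation bound, and finally read off the membership error from the simplex geometry of Lemma \ref{IS} together with the guarantees of SPA. First I would record the spectral geometry of $\mathcal{S}$. Since $\Theta_l=\rho B_l$ and $\mathscr{R}_l=\Pi\Theta_l'$, we have $\mathcal{S}=\rho^2\Pi\big(\sum_{l\in[L]}B_l'B_l\big)\Pi'$, a rank-$K$ matrix whose nonzero eigenvalues coincide with those of $\rho^2(\sum_lB_l'B_l)^{1/2}\Pi'\Pi(\sum_lB_l'B_l)^{1/2}$. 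Assumption \ref{Assum11} together with Condition \ref{condition} then gives $\lambda_K(\mathcal{S})\gtrsim\rho^2\,\lambda_K(\sum_lB_l'B_l)\,\lambda_K(\Pi'\Pi)\gtrsim\rho^2NJL$, and because $\mathcal{S}$ has rank $K$ the gap separating its top $K$ eigenvalues from the rest equals $\lambda_K(\mathcal{S})$. In the same step I control the eigenvector incoherence: from $U=\Pi U(\mathcal{I},:)$ and $U'U=I_{K\times K}$ one obtains $U(\mathcal{I},:)'(\Pi'\Pi)U(\mathcal{I},:)=I_{K\times K}$, whence $\|U\|_{2\rightarrow\infty}\leq\|U(\mathcal{I},:)\|=\lambda_K(\Pi'\Pi)^{-1/2}\lesssim\sqrt{K/N}$ and, dually, $\|U^{-1}(\mathcal{I},:)\|=\lambda_1(\Pi'\Pi)^{1/2}\leq\sqrt{N}$; this latter factor is exactly what will rescale the eigenvector error into the membership error.

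Next I would establish the concentration bound by splitting $S-\mathcal{S}=(S-\E(S))+(\E(S)-\mathcal{S})$. Because the correction $D_l$ annihilates the diagonal, the deterministic term is $\E(S)-\mathcal{S}=-\mathrm{diag}(\mathcal{S})$, a diagonal matrix with $\|\mathrm{diag}(\mathcal{S})\|=\max_{i\in[N]}\sum_{l\in[L]}\sum_{j\in[J]}\mathscr{R}_l^2(i,j)\leq\rho^2JL$; this is precisely the source of the $O(1/N)$ term, since $\rho^2JL/\lambda_K(\mathcal{S})\asymp1/N$. For the stochastic term I would write $S-\E(S)=\sum_{l\in[L]}\big(R_lR_l'-D_l-\E(R_lR_l'-D_l)\big)$ as a sum of independent, mean-zero symmetric matrices and apply the matrix Bernstein inequality of \cite{tropp2012user}. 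Using $\mathrm{Var}(R_l(i,j))\leq\mathscr{R}_l(i,j)\leq\rho$ and $M=O(1)$, the variance statistic is of order $\rho^2NJL$ while the almost-sure term is of lower order under Assumption \ref{Assum1}, yielding $\|S-\E(S)\|\lesssim\sqrt{\rho^2NJL\,\log(N+J+L)}$ with high probability. Combining, $\|S-\mathcal{S}\|\lesssim\sqrt{\rho^2NJL\log(N+J+L)}+\rho^2JL$.

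I would then feed $\|S-\mathcal{S}\|$, the incoherence $\|U\|_{2\rightarrow\infty}$, and the gap $\lambda_K(\mathcal{S})$ into Theorem 4.2 of \cite{cape2019the} to obtain
\[
\|\hat{U}\hat{U}'-UU'\|_{2\rightarrow\infty}\lesssim\frac{\|S-\mathcal{S}\|\,\|U\|_{2\rightarrow\infty}}{\lambda_K(\mathcal{S})}\lesssim\frac{1}{\sqrt{N}}\Big(\sqrt{\tfrac{\log(N+J+L)}{\rho^2NJL}}+\tfrac{1}{N}\Big),
\]
where the two contributions descend respectively from the stochastic fluctuation and the diagonal bias. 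From this rotation-free projection bound I would pass to $\|\hat{U}-U\mathcal{O}\|_{2\rightarrow\infty}$ for the optimal orthogonal $\mathcal{O}$, invoke the SPA error guarantee of \cite{gillis2015semidefinite} (using that the rows of $U$ form an exact simplex with vertices $U(\mathcal{I},:)$) to bound the vertex-hunting error $\|\hat{U}(\hat{\mathcal{I}},:)-U(\mathcal{I},:)\mathcal{O}\|_{2\rightarrow\infty}$, and propagate through $\hat{H}=\mathrm{max}(0,\hat{U}\hat{U}^{-1}(\hat{\mathcal{I}},:))$ and the $\ell_1$ row normalization. Since the truncation $\mathrm{max}(0,\cdot)$ and the normalization are non-expansive up to constants while the inversion contributes the factor $\|U^{-1}(\mathcal{I},:)\|\leq\sqrt{N}$, multiplying the eigenvector bound by $\sqrt{N}$ cancels the $1/\sqrt{N}$ and delivers exactly $\max_{i\in[N]}\|e_i'(\hat{\Pi}-\Pi\mathcal{P})\|_1=O(\sqrt{\log(N+J+L)/(\rho^2NJL)})+O(1/N)$.

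The hard part will be the concentration step. Producing the operator-norm bound on $S-\mathcal{S}$ requires carefully setting up the matrix Bernstein inequality for the debiased products $R_l(i,j)R_l(i',j)$, identifying the correct variance proxy under the $\mathrm{Binomial}(M,\mathscr{R}_l/M)$ sparsity scaling, and verifying that $D_l$ removes \emph{exactly} the diagonal bias so that the residual deterministic term is only of order $\rho^2JL$ (and hence contributes just $O(1/N)$); this last point is what distinguishes the debiased statistic $S$ from the biased $\sum_lR_lR_l'$. A secondary difficulty is the bookkeeping of the orthogonal alignment $\mathcal{O}$ between $\hat{U}$ and $U$ when handing the row-wise bound to SPA, and checking that the condition-number factors introduced by $U^{-1}(\mathcal{I},:)$ and by the normalization do not inflate the rate beyond the stated order.
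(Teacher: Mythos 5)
Your proposal follows the same architecture as the paper's proof: concentrate $S$ around $\mathcal{S}$, convert this into a row-wise eigenvector bound via Theorem 4.2 of \cite{cape2019the} together with the incoherence $\|U\|_{2\rightarrow\infty}=O(1/\sqrt{N})$ and the eigengap $|\lambda_{K}(\mathcal{S})|\geq\rho^{2}\lambda_{K}(\Pi'\Pi)|\lambda_{K}(\sum_{l\in[L]}B_{l}'B_{l})|\gtrsim\rho^{2}NJL$, and then hand the resulting $\|\hat{U}\hat{U}'-UU'\|_{2\rightarrow\infty}$ bound to the SPA/simplex machinery of \cite{mao2021estimating}, which supplies exactly the $\kappa(\Pi'\Pi)\sqrt{\lambda_{1}(\Pi'\Pi)}\leq\sqrt{N}$ factor you anticipate. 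Your explicit decomposition $\mathbb{E}(S)-\mathcal{S}=-\mathrm{diag}(\mathcal{S})$ with $\|\mathrm{diag}(\mathcal{S})\|\leq\rho^{2}JL$ is in fact a slightly cleaner way to isolate the source of the $O(1/N)$ term; the paper reaches the same $\rho^{2}JL$ contribution by absorbing the diagonal directly into its row-sum computation.

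The one step you need to repair is the norm used in the concentration bound. Theorem 4.2 of \cite{cape2019the} is stated in terms of $\|S-\mathcal{S}\|_{\infty}$ (the maximum absolute row sum), not the spectral norm, and since $\|E\|\leq\|E\|_{\infty}$ for symmetric $E$, a matrix-Bernstein bound on $\|S-\mathbb{E}(S)\|$ does not imply the required control of $\|S-\mathbb{E}(S)\|_{\infty}$. The paper avoids this by applying the scalar Bernstein inequality row by row: it sets $z_{(ih)}=\sum_{l\in[L]}\sum_{j\in[J]}(R_{l}(i,j)R_{l}(h,j)-\mathscr{R}_{l}(i,j)\mathscr{R}_{l}(h,j))$, bounds $F_{(i)}=\sum_{h\neq i}z_{(ih)}v(h)$ for sign vectors $v\in\{-1,1\}^{(N-1)\times1}$ with variance proxy of order $\rho^{2}JL$ per pair, and takes a union bound over $i$, obtaining $\|S-\mathcal{S}\|_{\infty}=O(\sqrt{\rho^{2}NJL\log(N+J+L)})+\rho^{2}JL$. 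This is the same order you derive for the operator norm, so your final rate is unaffected once the concentration step is rerouted through the infinity norm (or through a row-wise perturbation theorem that genuinely accepts the spectral norm). Everything else — the Binomial variance bound $\mathrm{Var}(R_{l}(i,j))\leq\rho$, the rank-$K$ eigengap, and the propagation through the vertex hunting, the pseudo-inversion, and the $l_{1}$ normalization — matches the paper's argument.
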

The result presented in Theorem \ref{MAIN} indicates that enhancing the values of $\rho$, $N$, and $J$ enhances the performance of GoM-DSoG in estimating $\Pi$, which aligns with our intuition. In particular, a higher intensity parameter $\rho$ reduces the number of low-intensity answers, providing more informative and varied data. This makes the distinct response patterns of each latent class more pronounced and easier for the algorithm to distinguish, leading to more accurate parameter estimates. Moreover, it indicates that as $L$ increases, GoM-DSoG demonstrates improved performance for the task of GoM analysis. This observation underscores the benefits of considering multi-layer categorical data over single-layer categorical data, as the former contains more responses and, consequently, more information. Through the proof of Theorem \ref{MAIN}, it is evident that the impact of the three model parameters $\rho, N, J$, and $L$ cannot be analyzed without Assumption \ref{Assum1}, Assumption \ref{Assum11}, and Condition \ref{condition}. This underscores that these assumptions are employed for theoretical purposes. Furthermore, Theorem \ref{MAIN} relies critically on the pure subject assumption in Equation (\ref{PureSubject}), as this ensures the existence of the Ideal Simplex structure in Lemma \ref{IS}. When Lemma \ref{IS} holds, applying SPA to $U$ can exactly identify the $K$ true pure subjects, implying that GoM-DSoG will exactly recover $\Pi$ and $\{\Theta_{l}\}^{L}_{l=1}$ when we use $\mathcal{S}$ to replace $S$ in GoM-DSoG. When Equation (\ref{PureSubject}) holds, we can apply the SPA algorithm to $\hat{U}$ to estimate pure subjects when $\{R_{l}\}^{L}_{l=1}$ are generated from the proposed model with parameters $\Pi$ and $\{\Theta_{l}\}^{L}_{l=1}$. Finally, we can bound the difference between $\Pi$ and $\hat{\Pi}$. Otherwise, if Equation (\ref{PureSubject}) does not hold (say, there is at least one latent class having no pure subjects), GoM-DSoG can not exactly recover $\Pi$ and $\{\Theta_{l}\}^{L}_{l=1}$ even when we use $\mathcal{S}$ to replace $S$ in GoM-DSoG, which leads to a result that we can not obtain Theorem \ref{MAIN} anymore.

For the task of estimating the item parameter matrices $\{\Theta_{l}\}^{L}_{l=1}$, the following lemma provides a bound on $\frac{\|\sum_{l\in[L]}(\hat{\Theta}_{l}-\Theta_{l}\mathcal{P})\|_{F}}{\|\sum_{l\in[L]}\Theta_{l}\|_{F}}$.
\begin{lem}\label{BoundItemParameterMatrices}
Consider a multi-layer GoM parameterized by $\{\Pi,\{\Theta_{l}\}^{L}_{l=1}\}$. Under the conditions that $\rho L\mathrm{max}(N,J)\gg\mathrm{log}(N+J+L)$ and $|\lambda_{1}(\sum_{l\in[L]}B_{l})|\geq c_{2}\sqrt{J}L$ for some constant $c_{2}>0$, with high probability, we have
\begin{align*}
\frac{\|\sum_{l\in[L]}(\hat{\Theta}_{l}-\Theta_{l}\mathcal{P})\|_{F}}{\|\sum_{l\in[L]}\Theta_{l}\|_{F}}=O(\sqrt{\frac{\mathrm{max}(N,J)\mathrm{log}(N+J+L)}{\rho NJL}}).
\end{align*}
\end{lem}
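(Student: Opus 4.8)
The plan is to write the numerator as a sum of one stochastic term and one deterministic term, each isolating a single source of error. Setting $E=\sum_{l\in[L]}(R_{l}-\mathscr{R}_{l})$, $P=\hat{\Pi}(\hat{\Pi}'\hat{\Pi})^{-1}$, and $\tilde{P}=\Pi\mathcal{P}\big((\Pi\mathcal{P})'\Pi\mathcal{P}\big)^{-1}=\Pi(\Pi'\Pi)^{-1}\mathcal{P}$, I would use $\hat{\Theta}_{l}=R_{l}'P$ and $\Theta_{l}\mathcal{P}=\mathscr{R}_{l}'\tilde{P}$ (the latter from Lemma \ref{IS} and the permutation algebra) together with the telescoping identity $R_l'P-\mathscr{R}_l'\tilde P=(R_l-\mathscr{R}_l)'P+\mathscr{R}_l'(P-\tilde P)$ to obtain
\[
\sum_{l\in[L]}(\hat{\Theta}_{l}-\Theta_{l}\mathcal{P})=E'P+\Big(\sum_{l\in[L]}\mathscr{R}_{l}\Big)'(P-\tilde{P}).
\]
This split is convenient because the factor $\sum_{l}\mathscr{R}_{l}=\rho\,\Pi\sum_{l}B_{l}'$ in the second term is deterministic, so only the first term requires a concentration argument.

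For the dominant term $E'P$, I would first control $\|E\|$ by a matrix Bernstein inequality applied to the entrywise decomposition $E=\sum_{l,i,j}E(i,j,l)\,e_i f_j'$, whose summands are independent, mean zero, and bounded by $M$. Since $\mathrm{Var}(E(i,j))\le\sum_l\mathscr{R}_l(i,j)\le L\rho$, the row- and column-variance statistics are at most $\max(N,J)L\rho$, so under the hypothesis $\rho L\max(N,J)\gg\log(N+J+L)$ the variance term dominates the bounded term and yields $\|E\|=O(\sqrt{\max(N,J)L\rho\log(N+J+L)})$ w.h.p. Next, $\|P\|_{F}^{2}=\mathrm{tr}((\hat{\Pi}'\hat{\Pi})^{-1})\le K/\sigma_{K}^{2}(\hat{\Pi})$; since $\lambda_{K}(\Pi'\Pi)$ is of order $N/K$ under Condition \ref{condition} and $\hat{\Pi}$ is close to $\Pi\mathcal{P}$ by Theorem \ref{MAIN} (so $\sigma_K(\hat\Pi)\asymp\sigma_K(\Pi)$ by Weyl), I get $\|P\|_{F}=O(K/\sqrt{N})$. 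Hence $\|E'P\|_{F}\le\|E\|\,\|P\|_{F}=O\big(K\sqrt{\max(N,J)L\rho\log(N+J+L)/N}\big)$. Dividing by the denominator, which satisfies $\|\sum_{l}\Theta_{l}\|_{F}=\rho\|\sum_{l}B_{l}\|_{F}\ge\rho\,|\lambda_{1}(\sum_{l}B_{l})|\ge c_{2}\rho\sqrt{J}L$, produces exactly $O\big(\sqrt{\max(N,J)\log(N+J+L)/(\rho NJL)}\big)$ for $K=O(1)$, the claimed rate.

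For the remaining term $(\sum_{l}\mathscr{R}_{l})'(P-\tilde{P})$, I would bound $\|\sum_{l}\mathscr{R}_{l}\|\le\rho\|\Pi\|\,\|\sum_{l}B_{l}\|=O(\rho\sqrt{NJ}L)$ using $\|\Pi\|\le\sqrt{N}$ and $\|\sum_l B_l\|\le\sqrt{JK}\,L$, and control $\|P-\tilde{P}\|_{F}$ by a pseudoinverse perturbation bound of the form $\|A^{+}-B^{+}\|_{F}\lesssim\max(\|A^{+}\|^{2},\|B^{+}\|^{2})\,\|A-B\|_{F}$. With $\|\tilde{P}\|=\|\Pi^{+}\|=O(\sqrt{K/N})$ and, from Theorem \ref{MAIN}, $\|\hat{\Pi}-\Pi\mathcal{P}\|_{F}\le\sqrt{N}\max_i\|e_i'(\hat{\Pi}-\Pi\mathcal{P})\|_1=O(\sqrt{N}\,\varpi)$ where $\varpi=\sqrt{\log(N+J+L)/(\rho^{2}NJL)}+1/N$, this gives $\|P-\tilde{P}\|_{F}=O(K\varpi/\sqrt{N})$, hence a contribution of order $\varpi$ to the ratio. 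This is dominated by the target rate: the leading part of $\varpi$ is smaller precisely because $\rho\max(N,J)\gtrsim1$, and the $1/N$ part is of lower order in the stated growth regime, so the second term is absorbed.

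The main obstacle I expect is twofold. First, getting the matrix-variance statistic sharp in the Bernstein step so that $\|E\|$ scales like $\sqrt{\max(N,J)L\rho\log}$ rather than inheriting an extra factor from the crude deterministic bound on a single layer's spectral norm; this forces the entrywise (Hermitian-dilation) formulation rather than a layerwise sum. Second, making the pseudoinverse perturbation fully rigorous—both the Weyl-type lower bound $\sigma_{K}(\hat{\Pi})\asymp\sqrt{N/K}$ and the passage from the $\ell_{1}$ per-row guarantee of Theorem \ref{MAIN} to a Frobenius bound on $\hat{\Pi}-\Pi\mathcal{P}$—must be handled carefully so that the term $E'P$ is genuinely dominant and the deterministic term is strictly lower order.
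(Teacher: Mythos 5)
Your proposal is correct and follows essentially the same route as the paper: the identical split $\sum_{l}(\hat{\Theta}_{l}-\Theta_{l}\mathcal{P})=E'\hat{G}+(\sum_{l}\mathscr{R}_{l})'(\hat{G}-G\mathcal{P})$ with $\hat G=\hat\Pi(\hat\Pi'\hat\Pi)^{-1}$, the same spectral-norm concentration $\|E\|=O(\sqrt{\rho L\,\mathrm{max}(N,J)\log(N+J+L)})$ (which the paper simply imports from Lemma 2 of \cite{MLLCM} rather than re-deriving via matrix Bernstein), the same $O(1/\sqrt{N})$ bound on $\hat G$, and the same denominator lower bound $\rho|\lambda_{1}(\sum_{l}B_{l})|\geq c_{2}\rho\sqrt{J}L$. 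Your treatment of the deterministic term is in fact more explicit than the paper's, which bounds it only by $\rho|\lambda_{1}(\sum_{l}B_{l})|$ and then silently drops the resulting $O(1)$ contribution to the ratio; the one caveat is that your claim that the $(\hat G-G\mathcal{P})$ contribution of order $\varpi$ is absorbed into the target rate uses $\rho\,\mathrm{max}(N,J)\gtrsim 1$, which does not follow from the stated hypothesis $\rho L\,\mathrm{max}(N,J)\gg\log(N+J+L)$ alone, though this imprecision is shared with (and indeed milder than) the paper's own argument.
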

Note that the conditions specified in Lemma \ref{BoundItemParameterMatrices} differ from those of Theorem \ref{MAIN}. This difference arises because bounding $\|\sum_{l\in[L]}(R_{l}-\mathscr{R}_{l})\|$ is necessary to obtain the result presented in Lemma \ref{BoundItemParameterMatrices}. Despite the distinct conditions, the result in Lemma \ref{BoundItemParameterMatrices} also indicates that enhancing the values of $\rho, N, J$, and $L$ leads to improved performance of the GoM-DSoG method.
\section{Estimation of $K$}\label{sec4}
Selecting the appropriate number of latent classes $K$ in multi-layer categorical data with a latent mixed membership structure poses a significant challenge. While Algorithm \ref{alg:GoM-DSoG} assumes $K$ is known, this is often not feasible in practical applications. Given $L$ observed responses $\{R_{l}\}^{L}_{l=1}$, our GoM-DSoG algorithm can always produce an estimated mixed membership matrix $\hat{\Pi}_{k}\in[0,1]^{N\times k}$ for any number of latent classes $k$. When faced with multiple potential choices for $K$, such that the true $K$ may be one of $\{1,2,\ldots, K_{c}\}$, we can run our GoM-DSoG algorithm on the data with $k$ latent classes for all $k\in[K_{c}]$, resulting in $K_{c}$ estimated mixed membership matrices $\{\hat{\Pi}_{k}\}^{K_{c}}_{k=1}$. Naturally, this leads to the question of which estimated membership matrix indicates the best quality of mixed membership partition.

To address this question, we propose a metric called averaged fuzzy modularity, inspired by the fuzzy modularity introduced in \citep{qing2024finding} for single-layer categorical data with a latent mixed membership structure. The averaged fuzzy modularity based on the estimated mixed membership matrix $\hat{\Pi}_{k}$ when we assume there are $k$ latent classes is defined as follows:
\begin{align}\label{averagedfuzzymodularity}
Q(\hat{\Pi}_{k})=\frac{1}{L}\sum_{l\in[L]}\sum_{i\in[N]}\sum_{j\in[N]}\frac{1}{\eta_{l}}(G_{l}(i,j)-\frac{g_{l}(i,i)g_{l}(j,j)}{\eta_{l}})\hat{\Pi}_{k}(i,:)\hat{\Pi}'_{k}(j,:),
\end{align}
where $G_{l}=R_{l}R'_{l}$, $g_{l}$ is an $N\times N$ diagonal matrix with $g_{l}(i,i)=\sum_{j\in[N]}G_{l}(i,j)$, and $\eta_{l}=\sum_{j\in[N]}g_{l}(j,j)$ for $i\in[N], l\in[L]$. This metric has several notable properties:
\begin{itemize}
  \item When there are no mixed subjects, our modularity reduces to the averaged modularity of \citep{paul2021null, MLLCM}.
  \item When $L=1$, it reduces to the fuzzy modularity of \citep{nepusz2008fuzzy, qing2024finding}.
  \item When $L=1$ and there are no mixed subjects, it becomes the popular Newman-Girvan modularity \citep{newman2004finding}.
\end{itemize}

A larger value of $Q$ indicates better quality of mixed membership partition \citep{newman2006modularity,nepusz2008fuzzy,paul2021null}. Therefore, similar to \citep{nepusz2008fuzzy}, we infer the optimal $K$ by choosing the one that maximizes $Q$ defined in Equation (\ref{averagedfuzzymodularity}).
\section{Experimental studies}\label{sec5}
\subsection{Simulations}
This part investigates the performance of the proposed GoM-DSoG method for the task of the grade of membership analysis and its accuracy in the estimation of $K$. To this end, we conduct five experiments to study the effects of various factors, including the number of subjects $N$, the number of layers $L$, the response intensity parameter $\rho$, the number of pure subjects in each latent class, and the number of latent classes $K$. In all experiments, the GoM-DSoG method is compared with the following two different algorithms:
\begin{itemize}
  \item \textbf{GoM-Sum}: Set $R_{\mathrm{sum}}=\sum_{l\in[L]}R_{l}$. Let $\hat{U}_{R_{\mathrm{sum}}}\hat{\Sigma}_{R_{\mathrm{sum}}}\hat{V}'_{R_{\mathrm{sum}}}$ represent $R_{\mathrm{sum}}$'s leading $K$ singular value decomposition, where $\hat{U}'_{R_{\mathrm{sum}}}\hat{U}_{R_{\mathrm{sum}}}=I_{K\times K}$. This approach estimates the parameters $\Pi$ and $\{\Theta_{l}\}^{N}_{l=1}$ by substituting $R_{\mathrm{sum}}$ for $S$ and $\hat{U}_{R_{\mathrm{sum}}}$ for $\hat{U}$ in Algorithm \ref{alg:GoM-DSoG}.
  \item \textbf{GoM-SoG}: Set $\tilde{S}=\sum_{l\in[L]}R_{l}R'_{l}$. This method estimates $\Pi$ and $\{\Theta_{l}\}^{N}_{l=1}$ by replacing $S$ with $\tilde{S}$ in Algorithm \ref{alg:GoM-DSoG}.
\end{itemize}
\begin{rem}
Define $\mathscr{R}_{\mathrm{sum}} = \sum_{l \in [L]} \mathscr{R}_{l}$. When the rank of the $J\times K$ aggregation matrix $\sum_{l \in [L]}B_{l}$ is $K$, let $U_{\mathscr{R}_{\mathrm{sum}}} \Sigma_{\mathscr{R}_{\mathrm{sum}}} V'_{\mathscr{R}_{\mathrm{sum}}}$ denote the leading $K$ singular value decomposition of $\mathscr{R}_{\mathrm{sum}}$ such that $U'_{\mathscr{R}_{\mathrm{sum}}} U_{\mathscr{R}_{\mathrm{sum}}} = I_{K \times K}$. The intuition behind GoM-Sum stems from the observation that $U_{\mathscr{R}_{\mathrm{sum}}} = \Pi U_{\mathscr{R}_{\mathrm{sum}}}(\mathcal{I},:)$ also exhibits a simplex structure. The intuition behind GoM-SoG is analogous to that of GoM-DSoG, as $\tilde{S}$ serves as a biased estimator of $\mathcal{S}$ \citep{lei2023bias, su2024spectral, MLLCM}. By following a similar analysis as in \citep{MLLCM}, we can obtain theoretical guarantees for GoM-Sum and GoM-SoG. Furthermore, we can prove that our proposed GoM-DSoG consistently outperforms both GoM-Sum and GoM-SoG. For brevity, we omit the detailed theoretical analysis of GoM-Sum and GoM-SoG in this paper.
\end{rem}

To assess the performance of all approaches in estimating $\Pi$, the \textbf{Relative $l_{1}$ error} defined as $\mathrm{min}_{\mathcal{P}\in\mathcal{X}}\frac{\|\hat{\Pi}-\Pi\mathcal{P}\|_{1}}{N}$ is used, where $\mathcal{X}$ collects all $K\times K$ permutation matrices. To measure their performances in estimating $\{\Theta_{l}\}^{L}_{l=1}$, we use the \textbf{Relative $l_{2}$ error}, defined as $\mathrm{min}_{\mathcal{P}\in\mathcal{X}}\frac{\|\sum_{l\in[L]}(\hat{\Theta}_{l}-\Theta_{l}\mathcal{P})\|_{F}}{\|\sum_{l\in[L]}\Theta_{l}\|_{F}}$. To evaluate the accuracy of all methods in estimating $K$, the \textbf{Accuracy rate} calculated by the proportion of times an approach accurately chooses the true $K$ by maximizing $Q$ is employed. The Relative $l_{1}$ error and Relative $l_{2}$ error metrics are such that smaller values are better, while the Accuracy rate metric is such that larger values are better.

For all experiments, we set $J=\frac{N}{5}$ and $M=5$, so that $R_{l}\in\{0,1,2,3,4,5\}^{N\times \frac{N}{5}}$ for all $l\in[L]$. Let each latent class have $N_{0}$ pure subjects. For a mixed subject $i$, set $\Pi(i,k)=\frac{\mathrm{rand}(1)}{K-1}$ for $k\in[K-1]$ and $\Pi(i,K)=1-\sum_{k\in[K-1]}\Pi(i,k)$ when $K\geq2$, where $\mathrm{rand}(1)$ denotes a random value from $(0,1)$. For $j\in[J], k\in[K], l\in[L]$, set $B_{l}(j,k)=\mathrm{rand}(1)$. For all experiments, $N, K, \rho, N_{0}$, and $L$ are set independently. For each parameter setting, we report the averaged metric over the 50 repetitions. For all approaches, $K$ is given for estimating $\Pi$ and $\{\Theta_{l}\}^{L}_{l=1}$ while $K$ is not given and is estimated by maximizing the averaged fuzzy modularity value $Q$ for the task of determining $K$. Meanwhile, for each experiment, there are two cases: the low-intensity case and the high-intensity case. In the low-intensity case, we let the response intensity parameter $\rho$ be small, while in the high-intensity case, $\rho$ is large. The  low-intensity case is characterized by numerous zeros in the generated data, which aligns multi-layer categorical data with binary responses or the scenario where many participants have lowest responses to all items in each test. Conversely, the high-intensity case features only a few zeros, corresponding to real-world situations where only a minority of participants have lowest responses in multi-layer categorical data with polytomous responses.

\textbf{Experiment 1: Effect of changing $N$.} Set $L=5$, $N_{0}=N/5$, and $K=3$. Let $N$ vary in $\{100, 200, \ldots, 1000\}$. For the low-intensity case, set $\rho=0.2$. For the high-intensity case, set $\rho=5$. Results are presented in Figure \ref{EX1}, indicating that (a) all methods enjoy better performance as the number of subjects $N$ grows; (b) our GoM-DSoG slightly outperforms GoM-SoG, and both methods significantly outperform GoM-Sum, especially in the low-intensity case; (c) all methods perform better in the high-intensity case compared to the low-intensity case for estimating $\Pi$, $\{\Theta_{l}\}^{L}_{l=1}$, and $K$. Notably, all methods demonstrate satisfactory performance in the high-intensity case, which is consistently observed in other experiments (thus, similar analyses are omitted for subsequent experiments).
\begin{figure}
\centering
\subfigure[$\rho=0.2$]{\includegraphics[width=0.3\textwidth]{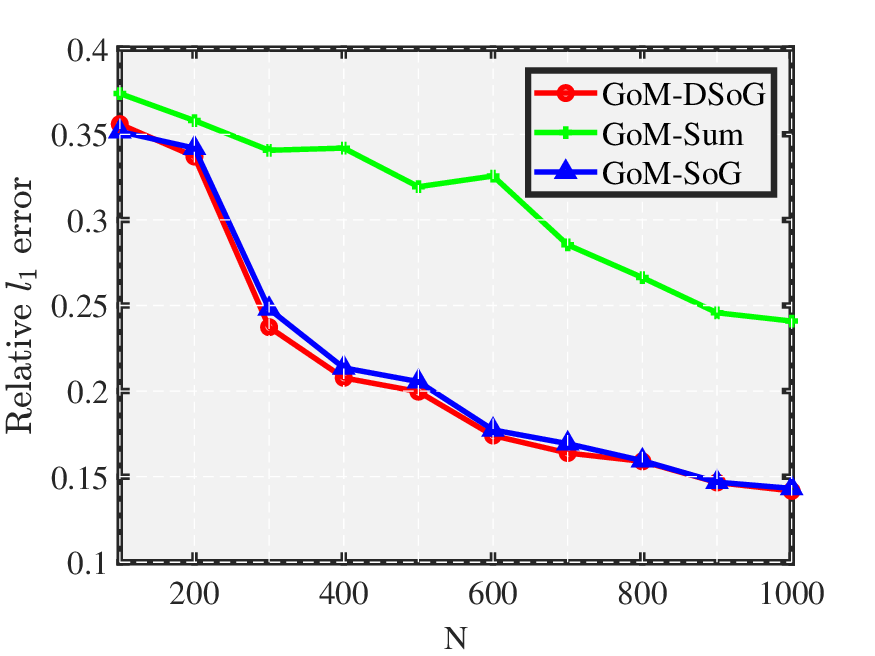}}
\subfigure[$\rho=0.2$]{\includegraphics[width=0.3\textwidth]{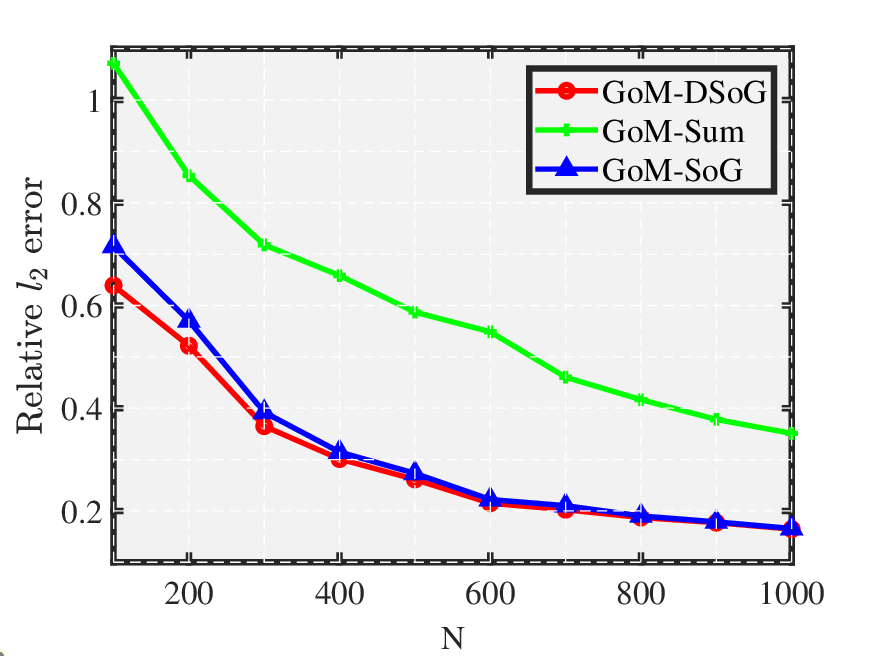}}
\subfigure[$\rho=0.2$]{\includegraphics[width=0.3\textwidth]{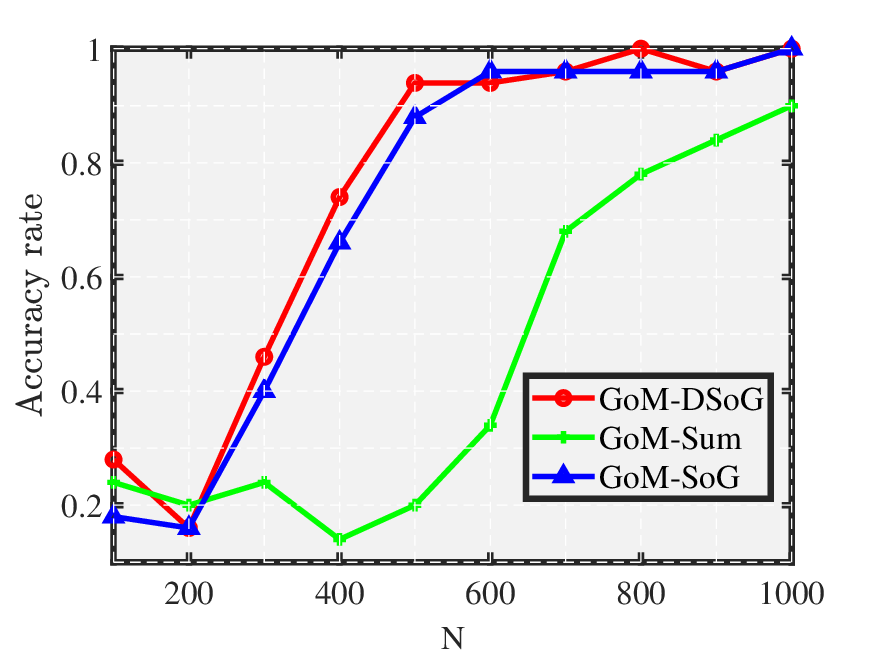}}
\subfigure[$\rho=5$]{\includegraphics[width=0.3\textwidth]{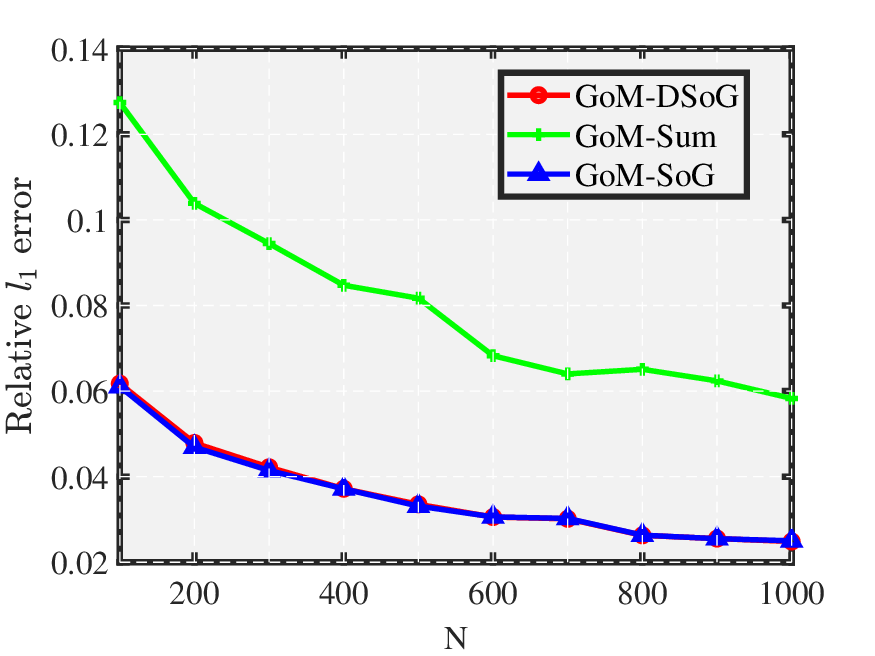}}
\subfigure[$\rho=5$]{\includegraphics[width=0.3\textwidth]{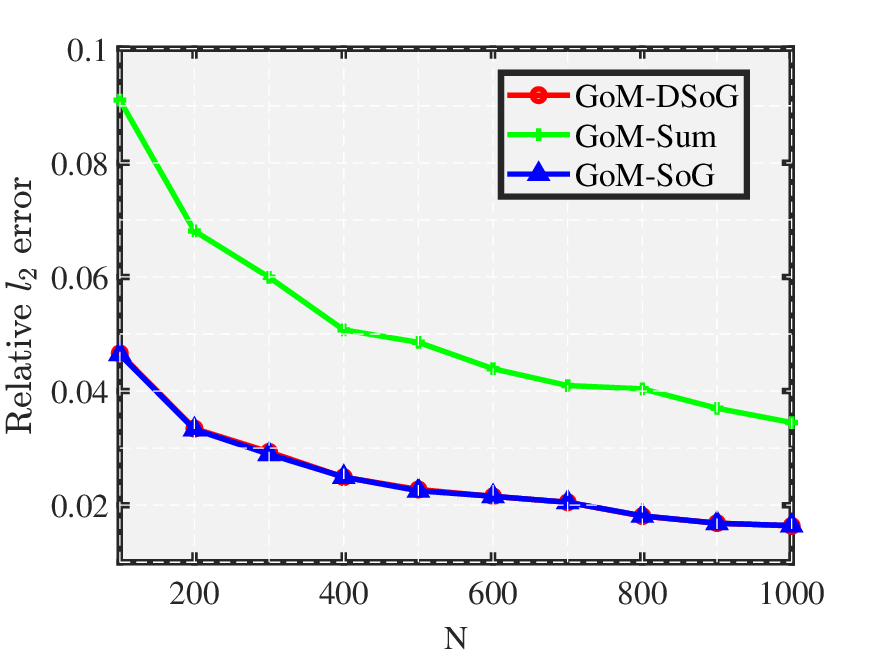}}
\subfigure[$\rho=5$]{\includegraphics[width=0.3\textwidth]{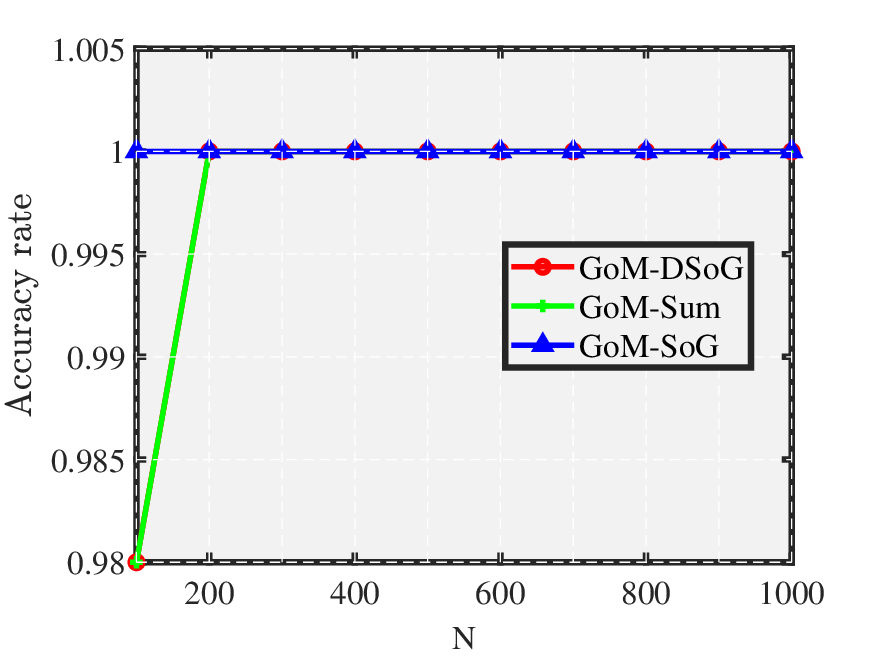}}
\caption{Experiment 1.}
\label{EX1} 
\end{figure}

\textbf{Experiment 2: Effect of changing $L$.} Set $N=500$, $N_{0}=100$, and $K=3$. Let $L$ vary in $\{1, 2, \ldots, 10\}$. For the low-intensity  case, set $\rho=0.2$. For the high-intensity case, set $\rho=5$. Results are reported in Figure \ref{EX2}. We observe that GoM-DSoG and GoM-SoG perform similarly, both methods behave better as $L$ increases, and they consistently outperform GoM-Sum. The results of this experiment support the advantages of considering multiple tests (i.e., layers) at different times. Specifically, if a significant proportion of lowest responses is present in the data, conducting additional tests at various times, i.e., increasing $L$, can be beneficial. This process results in multi-layer categorical data. However, for such data, simply summing all response matrices (i.e., using the GoM-Sum method) for the grade of membership analysis and estimating $K$ is inefficient. Instead, our proposed GoM-DSoG method is more effective.
\begin{figure}
\centering
\subfigure[$\rho=0.2$]{\includegraphics[width=0.3\textwidth]{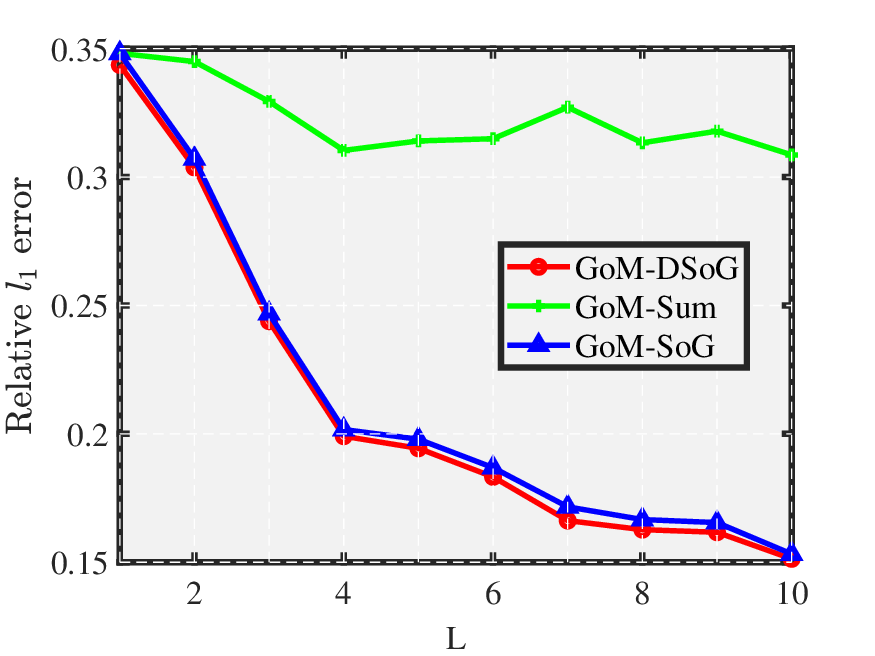}}
\subfigure[$\rho=0.2$]{\includegraphics[width=0.3\textwidth]{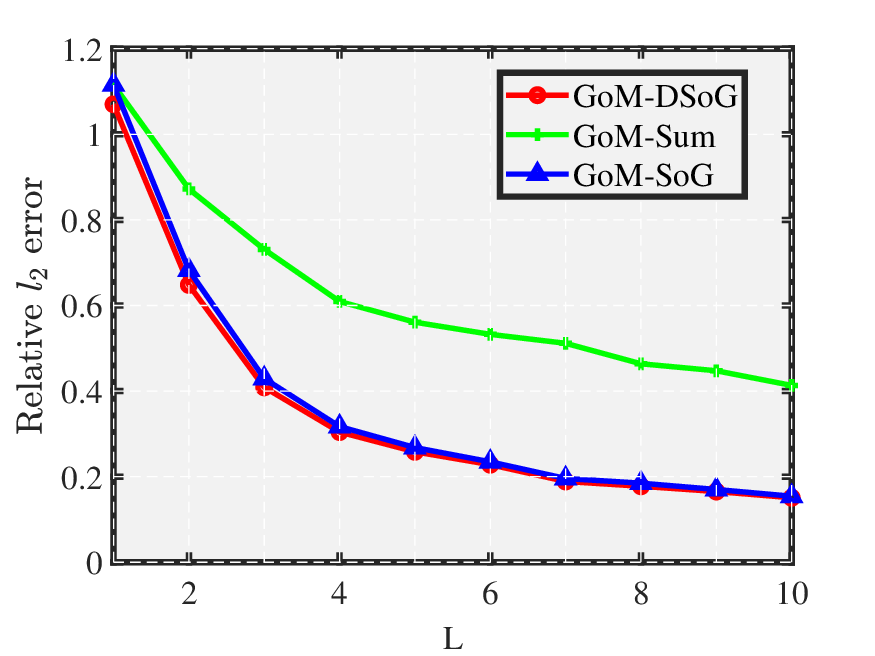}}
\subfigure[$\rho=0.2$]{\includegraphics[width=0.3\textwidth]{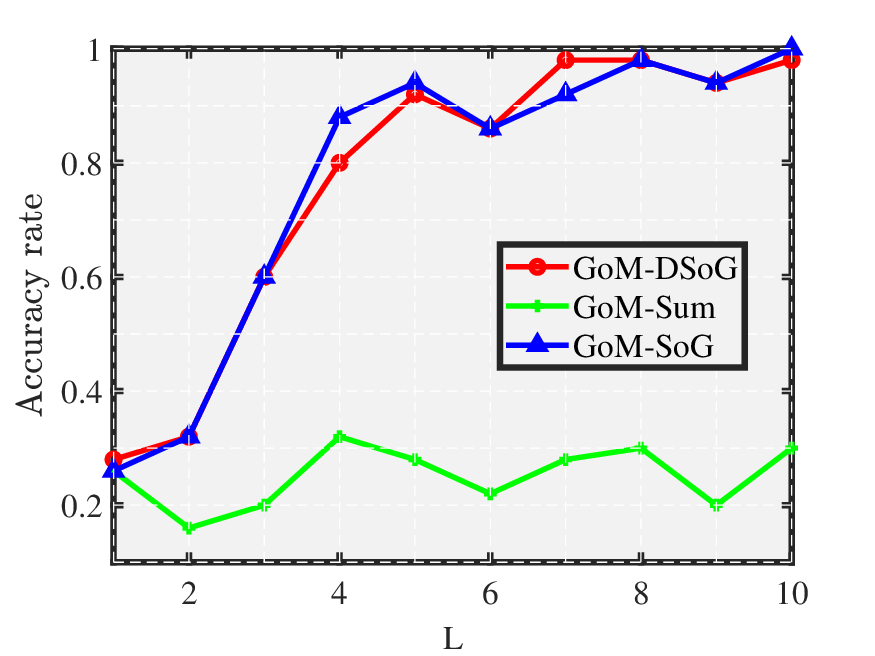}}
\subfigure[$\rho=5$]{\includegraphics[width=0.3\textwidth]{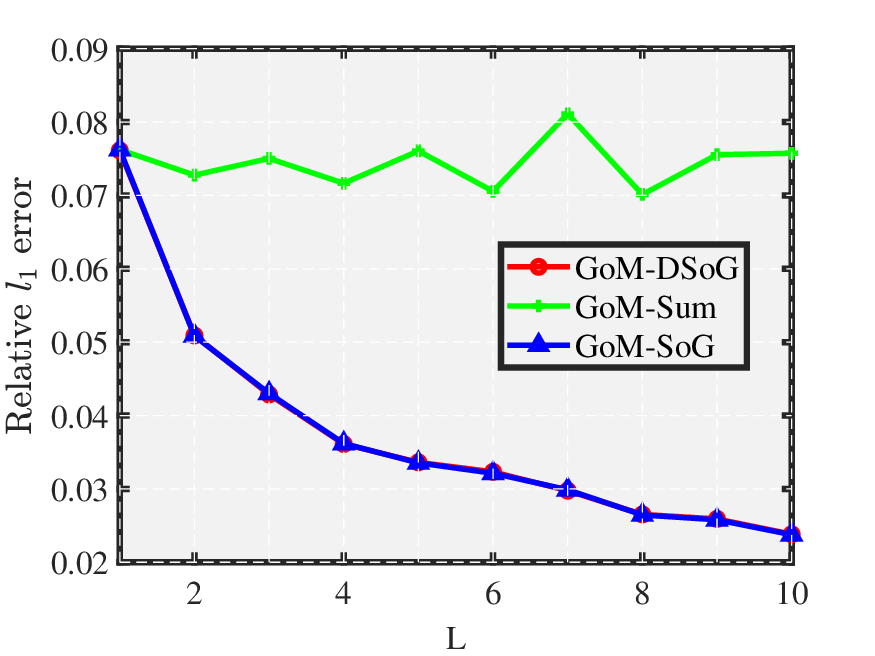}}
\subfigure[$\rho=5$]{\includegraphics[width=0.3\textwidth]{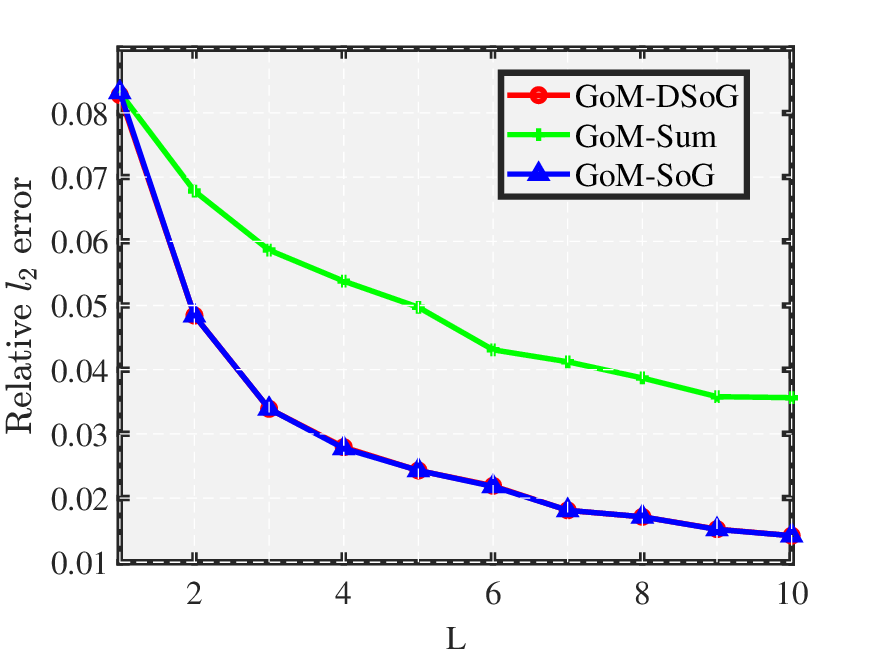}}
\subfigure[$\rho=5$]{\includegraphics[width=0.3\textwidth]{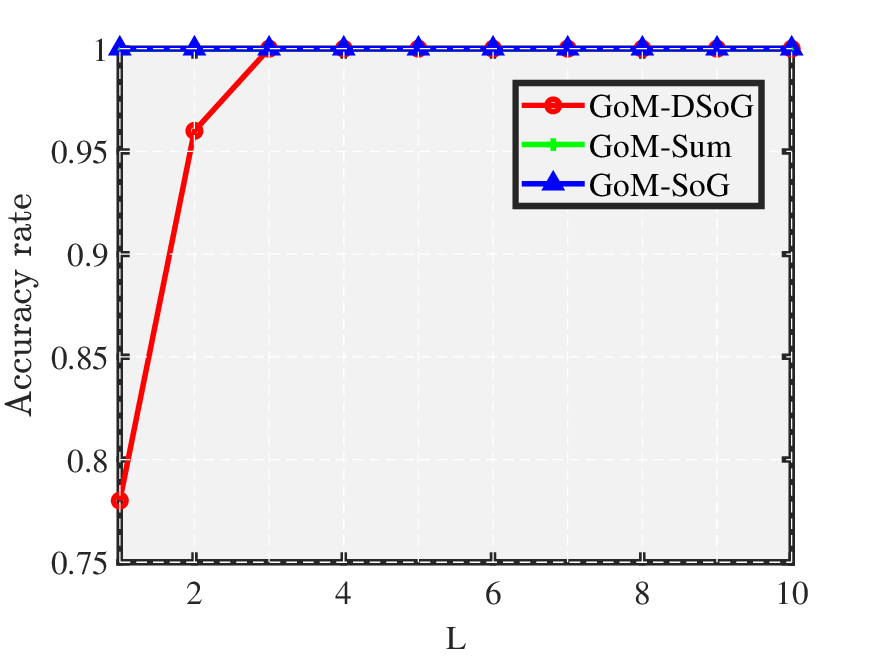}}
\caption{Experiment 2.}
\label{EX2} 
\end{figure}

\textbf{Experiment 3: Effect of changing $\rho$.} Set $N=500$, $L=5$, $N_{0}=100$, and $K=3$. For the low-intensity  case, let $\rho$ vary in $\{0.05, 0.1, \ldots, 0.5\}$. For the high-intensity case, let $\rho$ vary in $\{0.5, 1, \ldots, 5\}$. Results are displayed in Figure \ref{EX3}. Observations indicate that all methods perform better as the response intensity parameter $\rho$ increases, our GoM-DSoG slightly outperforms GoM-SoG, and both significantly outperform GoM-Sum. We also plot the estimated simplex structure returned by applying the SPA algorithm to the eigenvector matrix $\hat{U}$ under varying response intensity parameters $\rho$ in Figure \ref{EX3IS}. As $\rho$ increases, the points representing pure subjects (denoted by red points) become increasingly concentrated near the vertices of the simplex structure, while the points representing mixed subjects (denoted by black points) tend to distribute more uniformly within the interior of the simplex. This observation aligns with the theoretical expectation that higher values of $\rho$ reduce the number of zero entries in the data, thereby improving the separability of pure and mixed subjects and making it easier for SPA to accurately detect pure subjects. We further report the Accuracy rate of SPA in identifying pure subjects in Figure \ref{EX3SPArho}, where the Accuracy rate is defined as the proportion of times, over 50 repetitions, that applying SPA to $\hat{U}$ correctly identifies $K$ pure subjects, each from a distinct latent class. From Figure \ref{EX3SPArho}, it is evident that the Accuracy rate of SPA increases as $\rho$ increases. When $\rho$ is small, the data has a significant number of zero entries. This intensity introduces noise and makes it challenging for SPA to distinguish pure subjects from mixed subjects. However, as $\rho$ increases, the data has more higher responses, and the inherent structure of the simplex becomes more pronounced. This allows SPA to more accurately identify pure subjects. These results highlight SPA’s vertex stability—its ability to consistently and reliably in hunting the vertices of the simplex structure, even under varying levels of data intensity controlled by the parameter $\rho$.
\begin{figure}
\centering
\subfigure[$\rho\in\{0.05,0.1,\ldots,0.5\}$]{\includegraphics[width=0.3\textwidth]{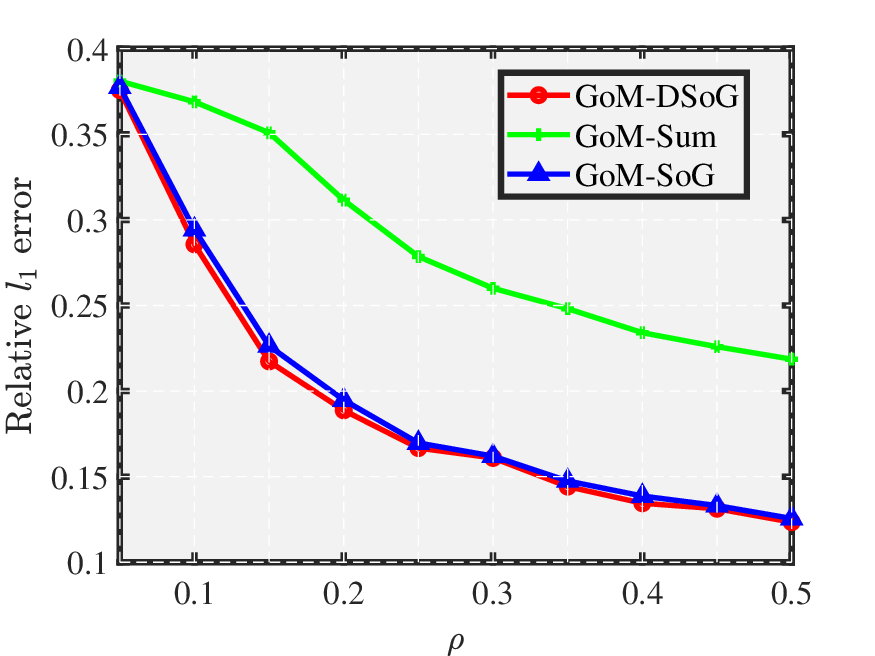}}
\subfigure[$\rho\in\{0.05,0.1,\ldots,0.5\}$]{\includegraphics[width=0.3\textwidth]{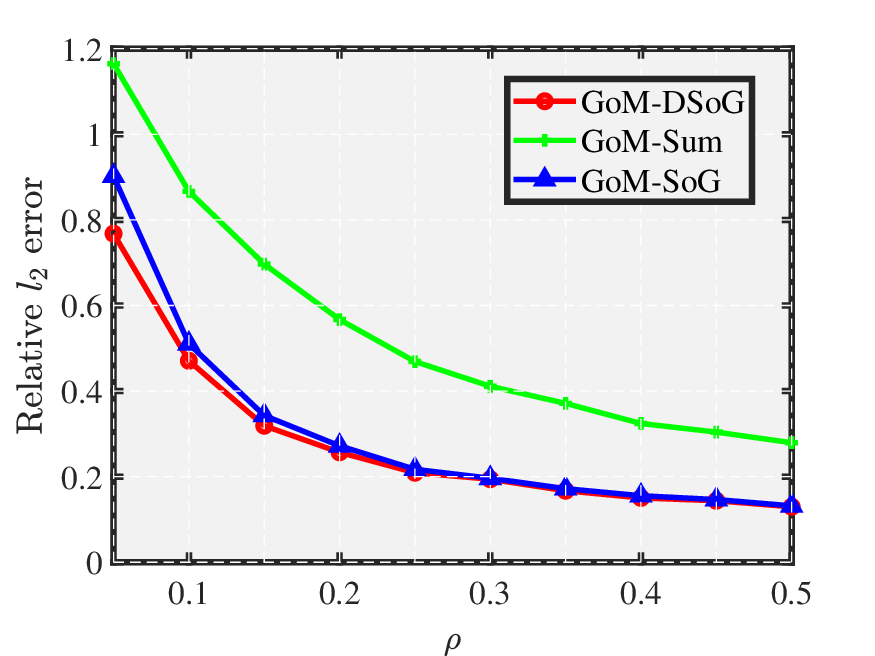}}
\subfigure[$\rho\in\{0.05,0.1,\ldots,0.5\}$]{\includegraphics[width=0.3\textwidth]{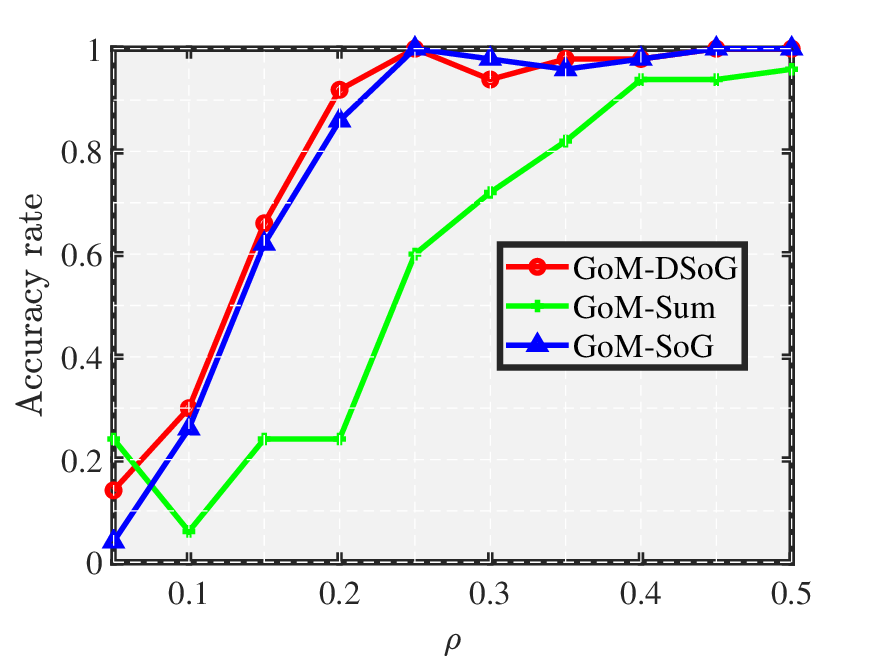}}
\subfigure[$\rho\in\{0.5,1,\ldots,5\}$]{\includegraphics[width=0.3\textwidth]{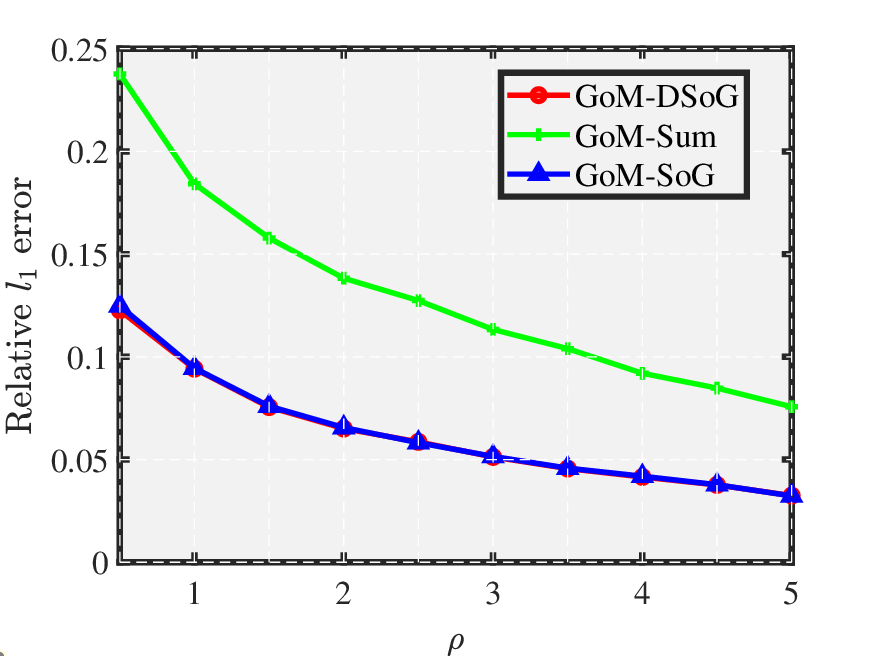}}
\subfigure[$\rho\in\{0.5,1,\ldots,5\}$]{\includegraphics[width=0.3\textwidth]{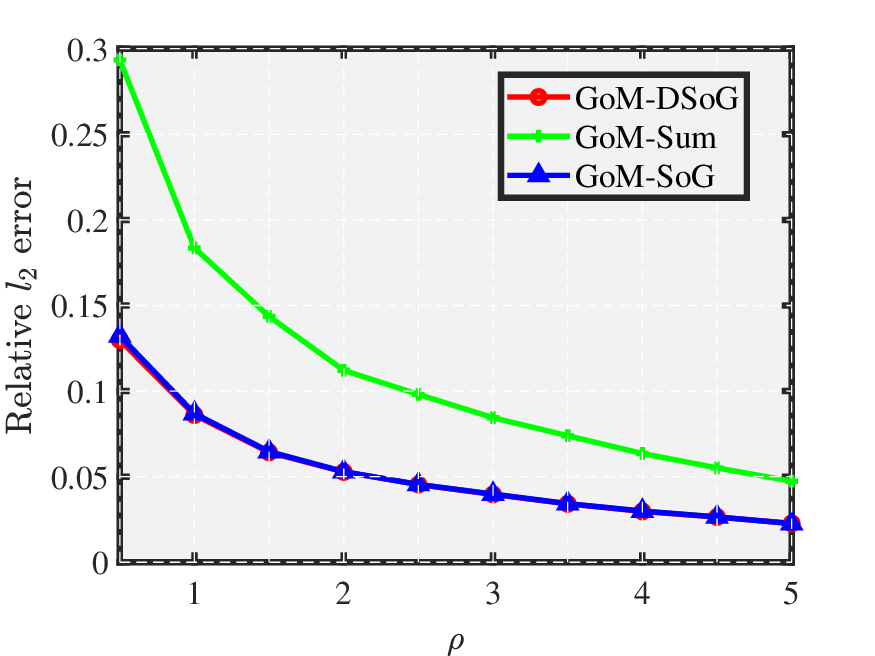}}
\subfigure[$\rho\in\{0.5,1,\ldots,5\}$]{\includegraphics[width=0.3\textwidth]{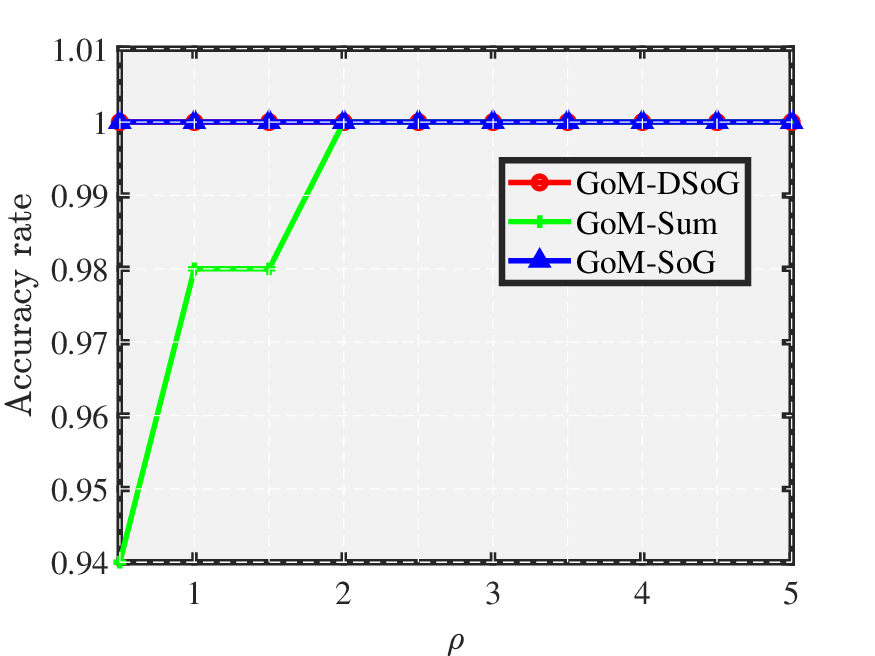}}
\caption{Experiment 3.}
\label{EX3} 
\end{figure}

\begin{figure}
\centering
{\includegraphics[width=0.32\textwidth]{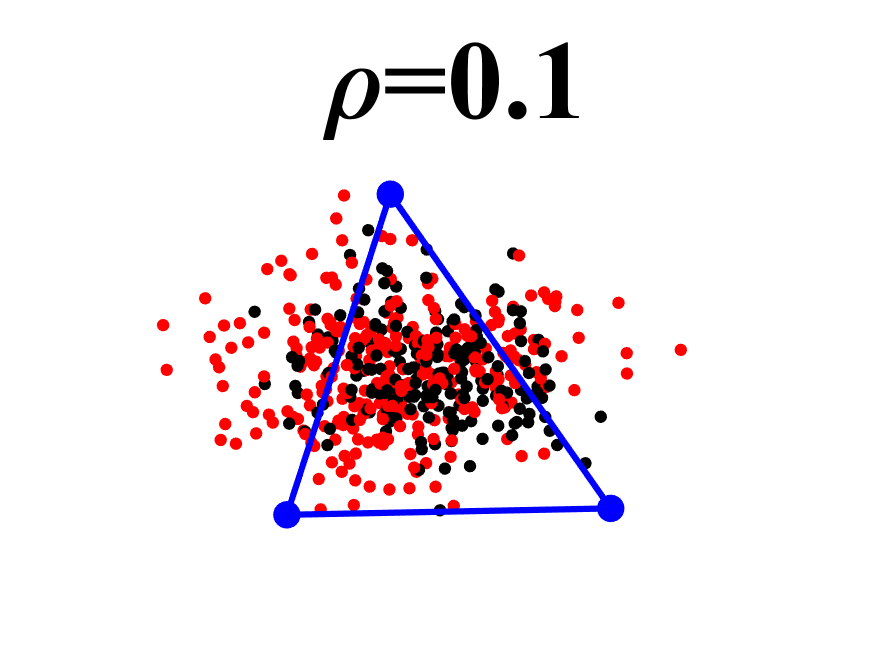}}
{\includegraphics[width=0.32\textwidth]{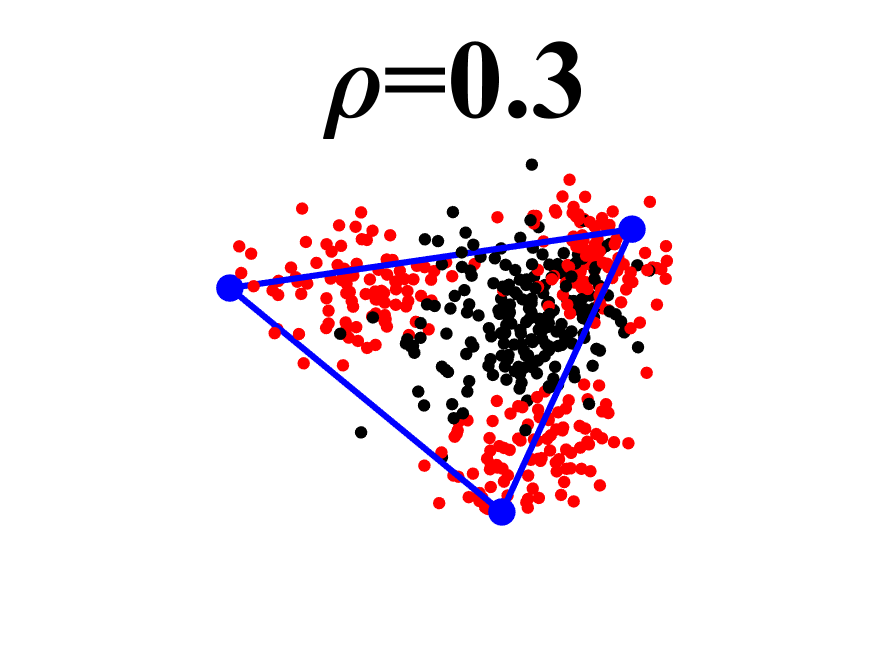}}
{\includegraphics[width=0.32\textwidth]{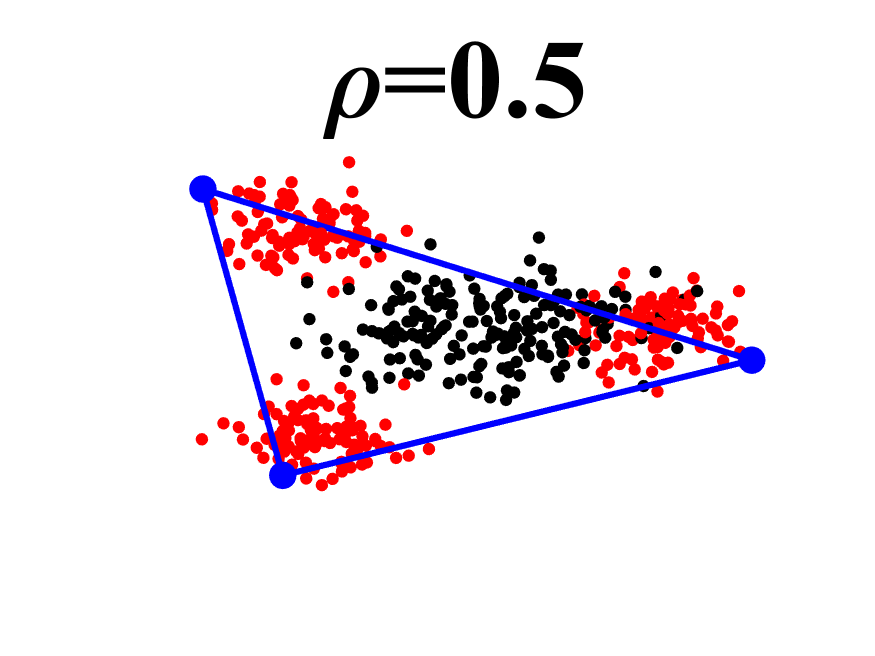}}
{\includegraphics[width=0.32\textwidth]{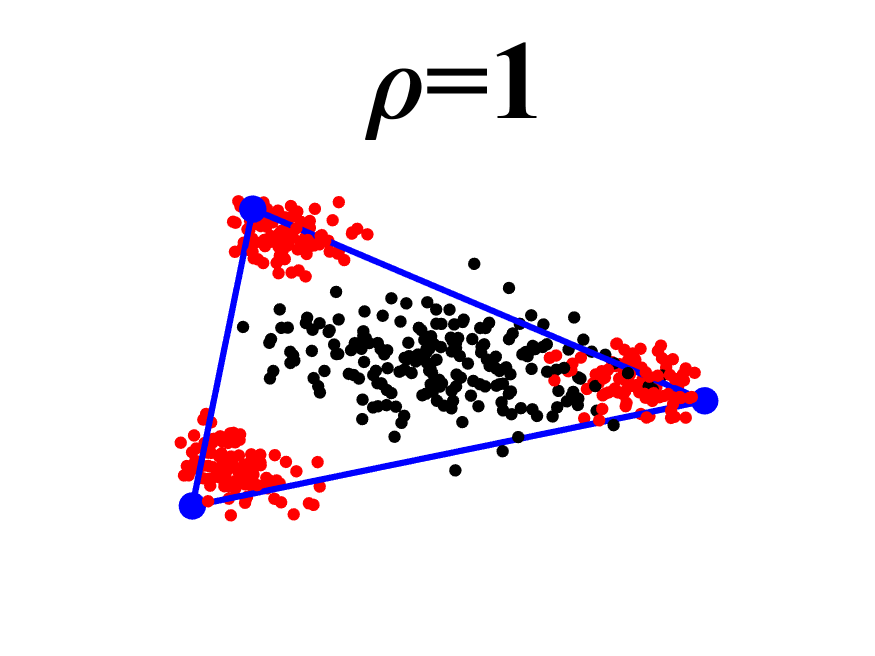}}
{\includegraphics[width=0.32\textwidth]{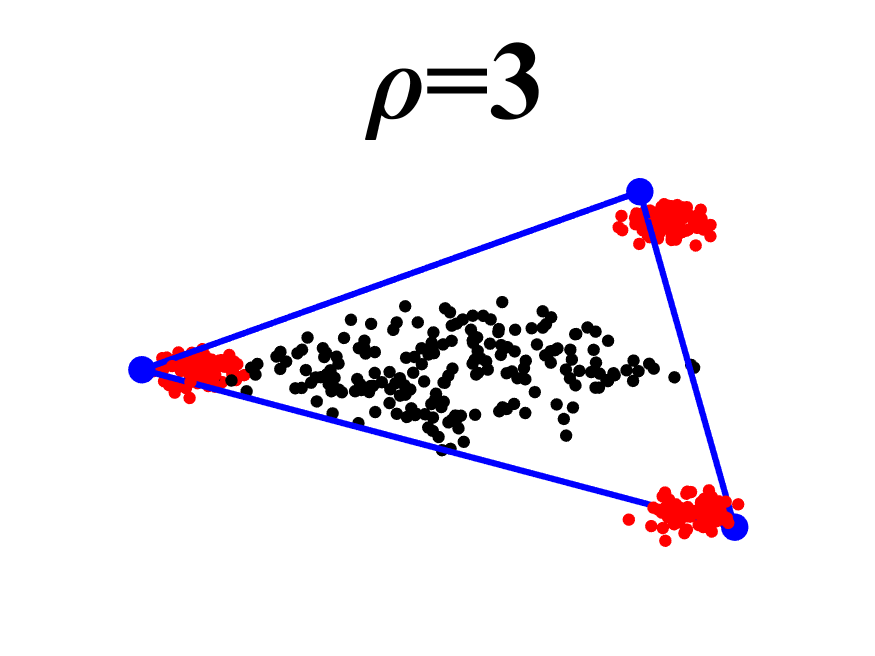}}
{\includegraphics[width=0.32\textwidth]{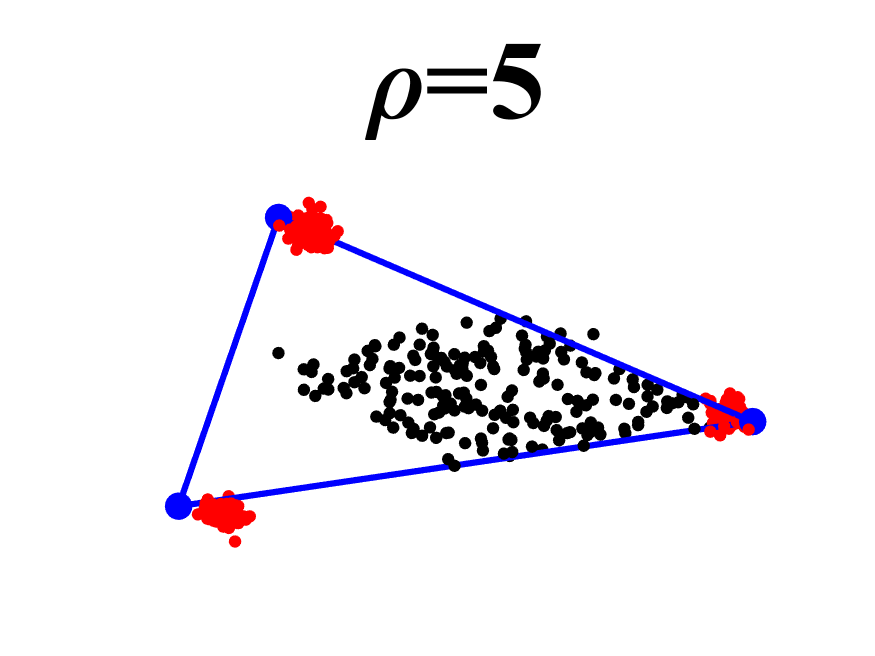}}
\caption{Illustration of the estimated simplex structure for different values of the response intensity parameter $\rho$. In each sub-figure, each point represents a row of $\hat{U}$ obtained from the debiased sum of Gram matrices $S$. Red points represent pure subjects, while black points represent mixed subjects. The three blue points denote the estimated pure subjects obtained by applying the SPA algorithm to $\hat{U}$, and the blue triangle represents the estimated simplex structure. For visualization, these points have been projected from $\mathbb{R}^{3}$ to $\mathbb{R}^{2}$.}
\label{EX3IS} 
\end{figure}

\begin{figure}
\centering
{\includegraphics[width=0.4\textwidth]{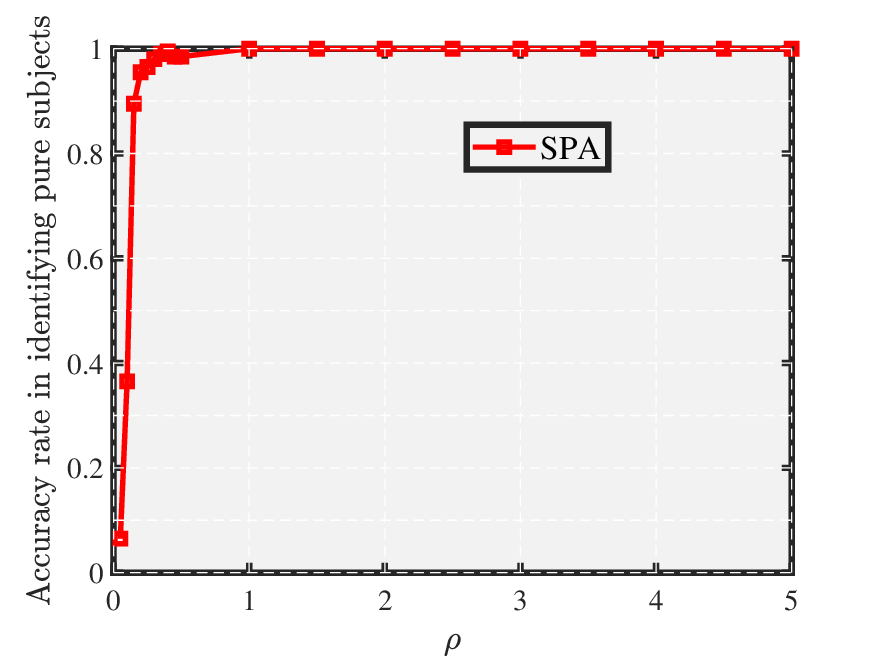}}
\caption{Accuracy rate of the SPA algorithm in identifying pure subjects when $\rho$ takes values in $\{0.05, 0.1, \ldots, 0.5, 1,1.5,2,\ldots,5\}$.}
\label{EX3SPArho} 
\end{figure}

\textbf{Experiment 4: Effect of changing $N_{0}$.} Set $N=600$, $L=10$, and $K=3$. Let $N_{0}$ vary in $\{20, 40, \ldots, 200\}$. For the low-intensity case, set $\rho=0.2$. For the high-intensity case, set $\rho=5$. Performances of the three methods are reported in Figure \ref{EX4}. Results suggest that all methods exhibit improved performance in estimating the item parameter matrices and $K$ when $N_{0}$ grows, with GoM-DSoG performing slightly better than GoM-SoG and both significantly outperforming GoM-Sum.
\begin{figure}
\centering
\subfigure[$\rho=0.2$]{\includegraphics[width=0.3\textwidth]{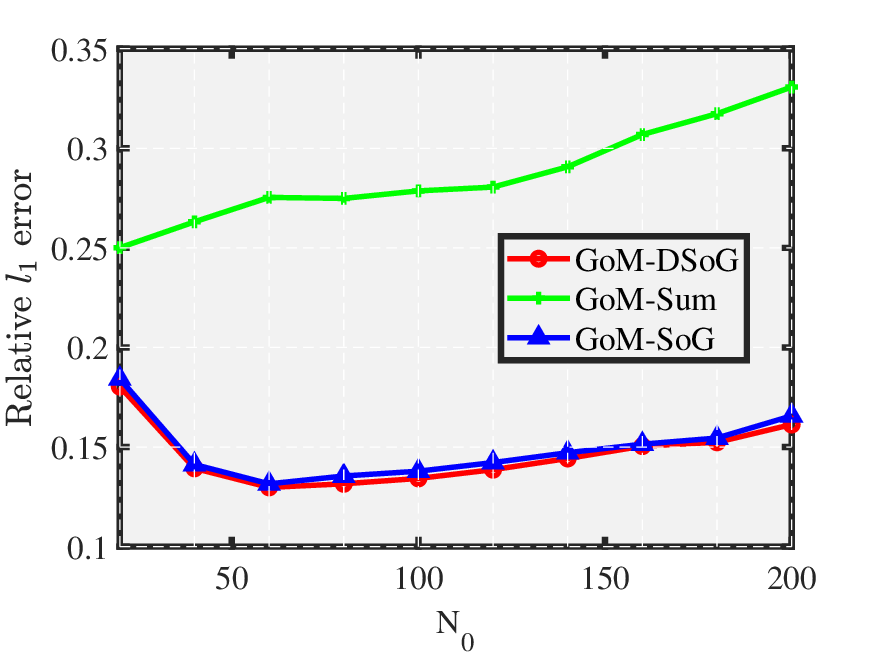}}
\subfigure[$\rho=0.2$]{\includegraphics[width=0.3\textwidth]{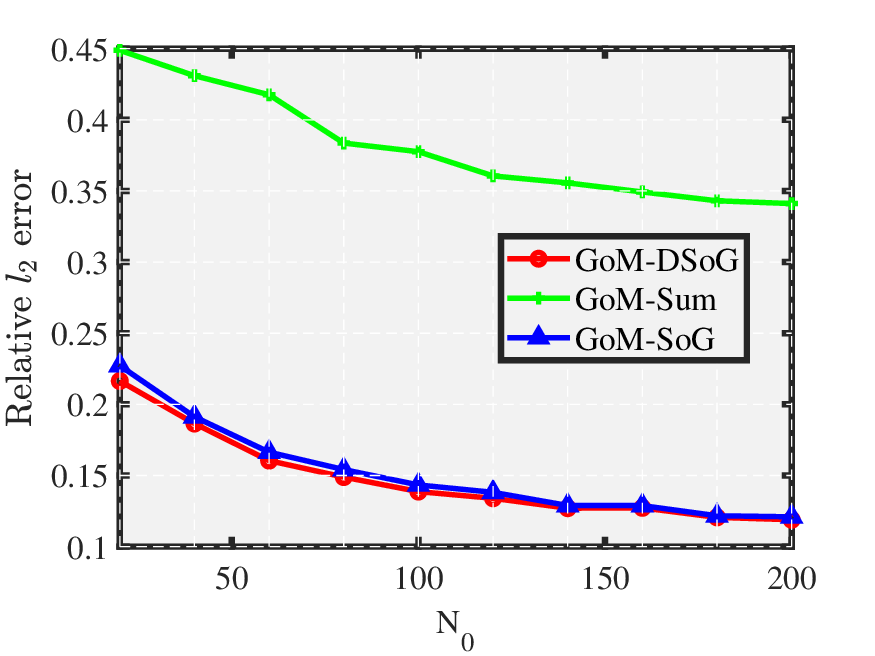}}
\subfigure[$\rho=0.2$]{\includegraphics[width=0.3\textwidth]{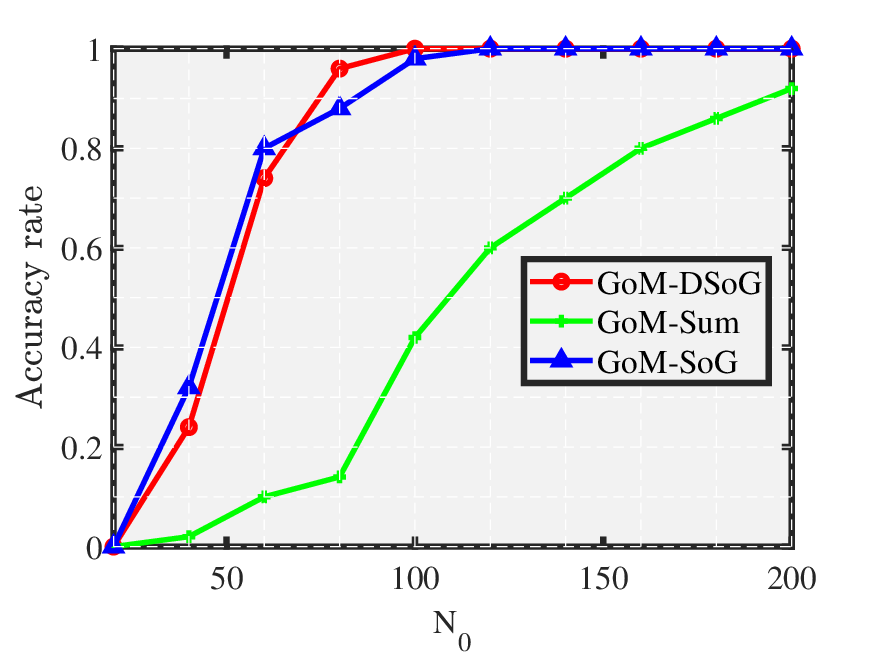}}
\subfigure[$\rho=5$]{\includegraphics[width=0.3\textwidth]{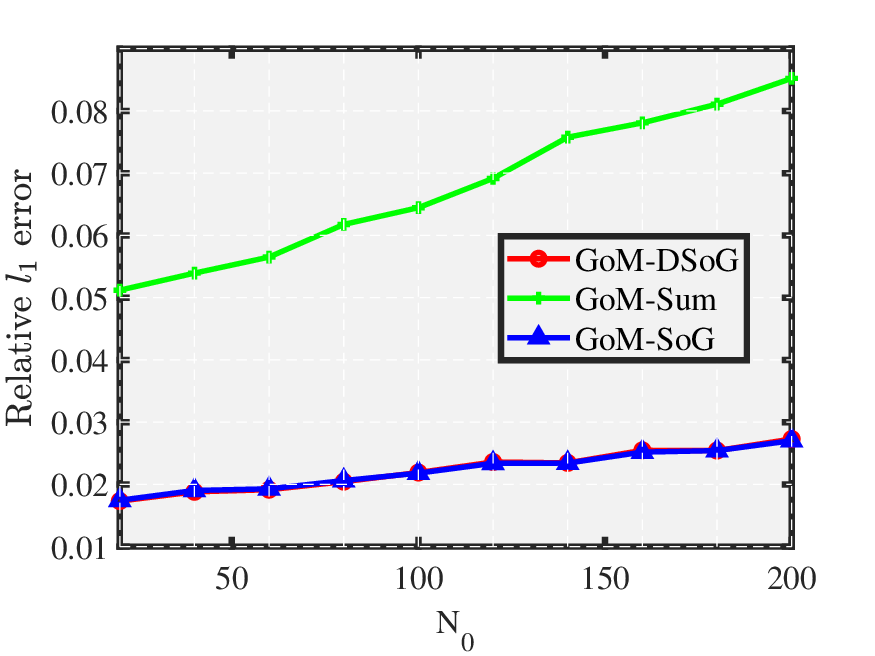}}
\subfigure[$\rho=5$]{\includegraphics[width=0.3\textwidth]{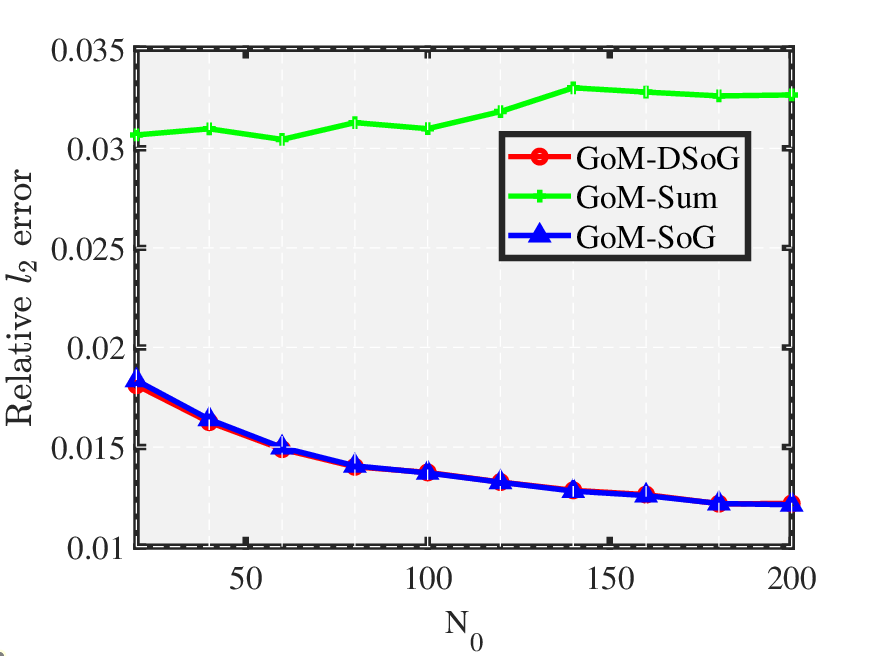}}
\subfigure[$\rho=5$]{\includegraphics[width=0.3\textwidth]{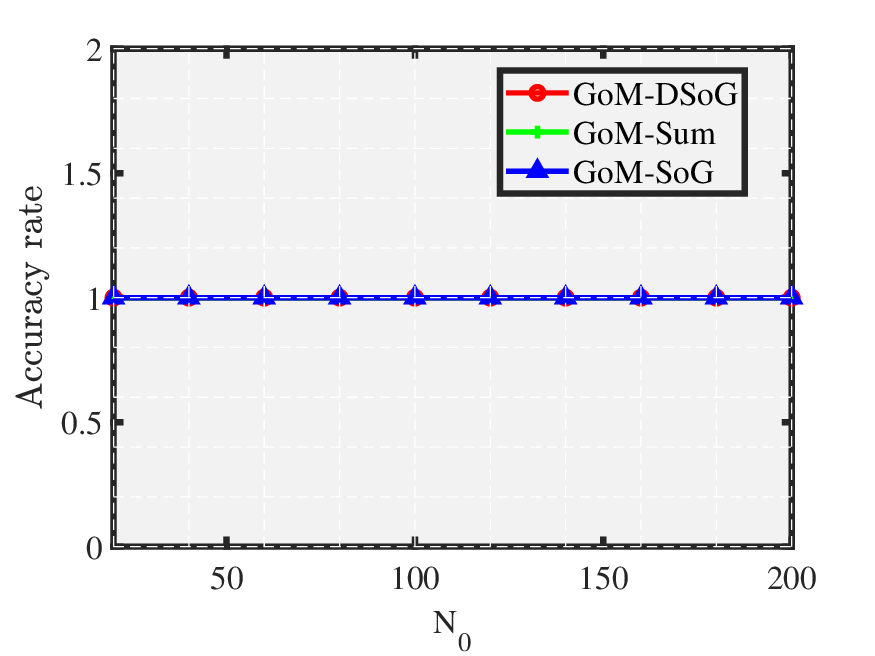}}
\caption{Experiment 4.}
\label{EX4} 
\end{figure}

\textbf{Experiment 5: Effect of changing $K$.} Set $N=100K, N_{0}=N/K, L=5$. Let $K$ vary in $\{1, 2, \ldots, 8\}$. For the low-intensity case, set $\rho=0.5$. For the high-intensity case, set $\rho=5$. Results are summarized in Figure \ref{EX5}. We observe that (a) GoM-DSoG and GoM-SoG significantly outperform GoM-Sum in the dense case; (b) all methods fail to estimate $K$ when the true $K$ is 1. However, this is not a critical issue since we typically assume that there are at least 2 latent classes for the task of grade of membership analysis.
\begin{figure}
\centering
\subfigure[$\rho=0.5$]{\includegraphics[width=0.3\textwidth]{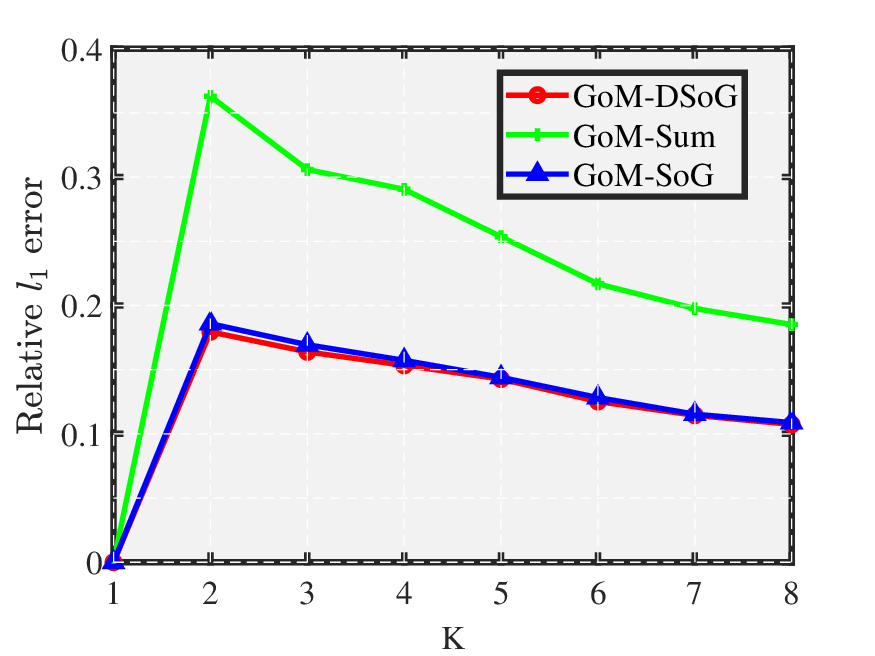}}
\subfigure[$\rho=0.5$]{\includegraphics[width=0.3\textwidth]{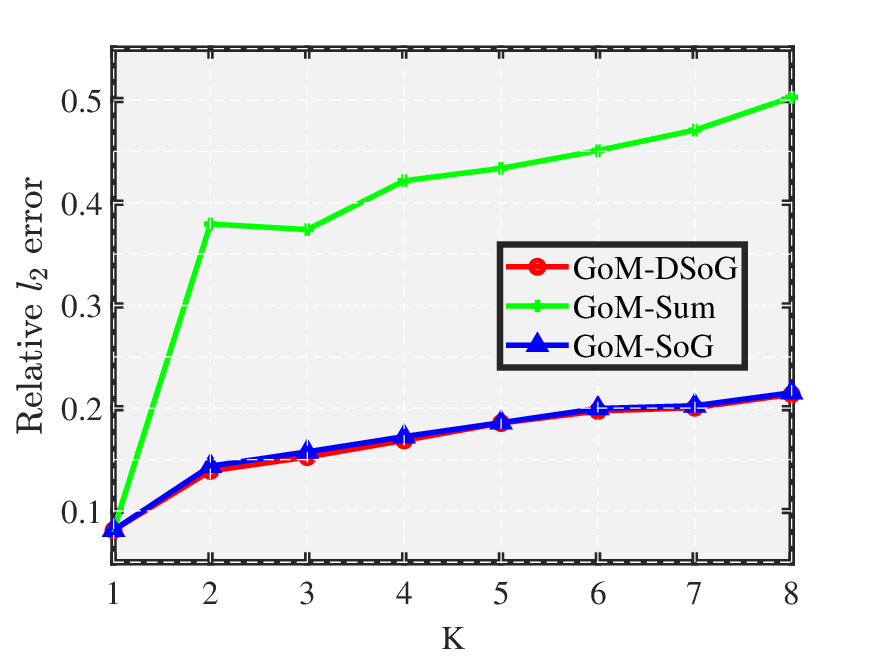}}
\subfigure[$\rho=0.5$]{\includegraphics[width=0.3\textwidth]{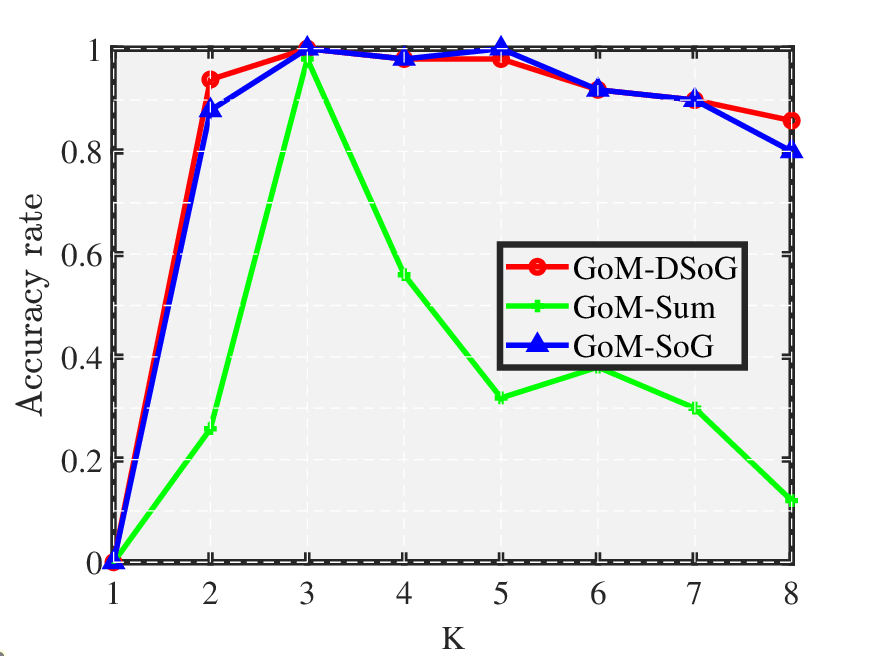}}
\subfigure[$\rho=5$]{\includegraphics[width=0.3\textwidth]{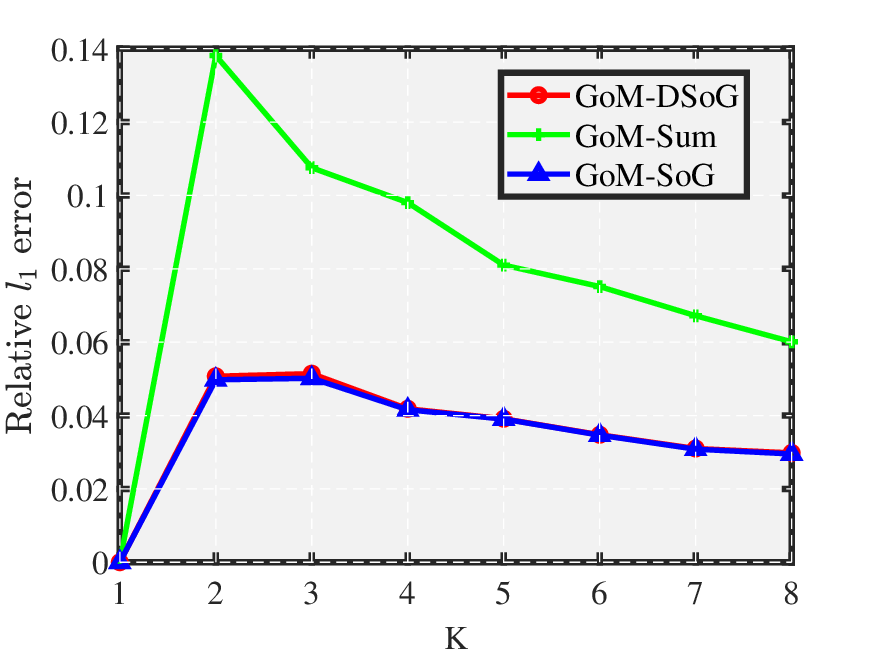}}
\subfigure[$\rho=5$]{\includegraphics[width=0.3\textwidth]{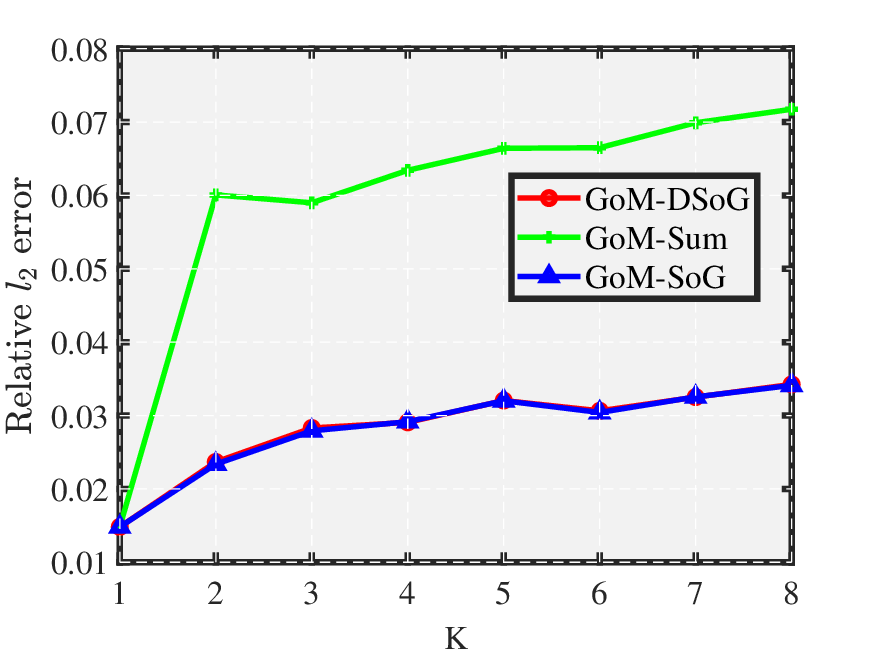}}
\subfigure[$\rho=5$]{\includegraphics[width=0.3\textwidth]{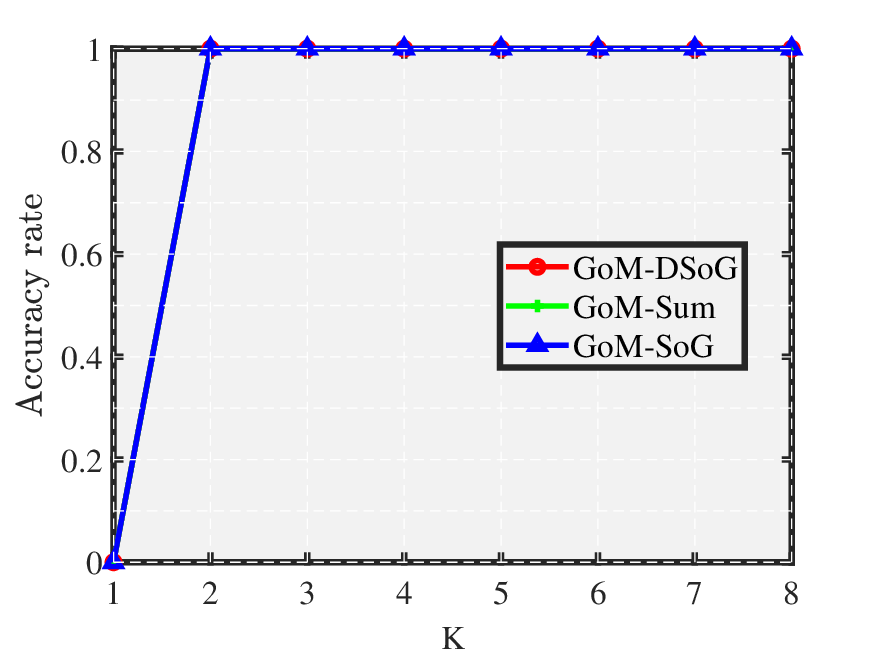}}
\caption{Experiment 5.}
\label{EX5} 
\end{figure}
\subsection{Real data}
In real-world data applications, the scarcity of readily available multi-layer ordinal categorical datasets poses a challenge for evaluating the performance of our GoM-DSoG algorithm. To address this, we utilize a single-layer ordinal categorical dataset derived from a survey, which we subsequently transform into a multi-layer structure through a randomized selection process. This dataset, sourced from a survey where participants were asked to generate random numbers within specified ranges corresponding to the Big Five Personality Test (hereafter referred to as BFPT), is accessible for download at \url{https://openpsychometrics.org/_rawdata/}. The BFPT dataset comprises 1369 subjects and 50 items, categorized into five personality traits: extraversion (E1-E10), neuroticism (N1-N10), agreeableness (A1-A10), conscientiousness (C1-C10), and openness (O1-O10). The details of each item can be found in Figure \ref{eTheta123}. For this data, responses 0, 1, 2, 3, 4, and 5 represent different levels of agreement with 0 denoting the lowest intensity and 5 denoting the highest intensity. Consequently, we have $R \in \{0,1,2,3,4,5\}^{1369 \times 50}$. BFPT contains a total of 206 0s, constituting approximately 0.3\% of the dataset ($\frac{206}{1369 \times 50}$). To convert the original single-layer BFPT data into a multi-layer ordinal categorical format, we employ a probabilistic approach. For $l\in[L]$, we set $R_{l}(i,j) = R(i,j)$ with a probability of 0.5, and $R_{l}(i,j) = 0$ otherwise. In this way, we generate a synthetic multi-layer dataset that approximates real-world conditions. In this study, we set $L = 3$ layers for our analysis, and we continue to refer to the transformed dataset as BFPT.

We estimate the number of latent classes by maximizing the modularity defined in Equation (\ref{averagedfuzzymodularity}) using our GoM-DSoG algorithm. The inferred value of $K$ is 3, thus we have $\hat{\Pi}\in[0,1]^{1369\times3}$. We call the three estimated latent classes as Class 1, Class 2, and Class 3. We find that $\sum_{i\in[N]}\hat{\Pi}(i,1)=370.0706, \sum_{i\in[N]}\hat{\Pi}(i,2)=323.1968$, and $\sum_{i\in[N]}\hat{\Pi}(i,3)=675.7326$, implying that most subjects tend to belong to Class 3. To further analyze the estimated mixed membership matrix $\hat{\Pi}$, we call subject $i$ strongly pure if $\mathrm{max}_{k\in[K]}\hat{\Pi}(i,k)\geq0.9$, strongly mixed if $\mathrm{max}_{k\in[K]}\hat{\Pi}(i,k)<0.5$, and moderate pure otherwise. We find that there are 19 strongly pure subjects, 322 strongly mixed subjects, and 1028 moderate pure subjects. The estimated mixed memberships of strongly pure subjects and a ternary visualization of the estimated mixed membership matrix $\hat{\Pi}$ are shown in Figure \ref{PiPure}. In the right panel of Figure \ref{PiPure}, the vertices of the triangle represent pure subjects. For any subject, a position closer to one of the triangle's vertices indicates a higher likelihood of being a pure subject. Conversely, a position closer to the center of the triangle suggests a more mixed membership. The ternary diagram clearly shows that most subjects tend to belong to Class 3.
\begin{figure}
\centering
\resizebox{\columnwidth}{!}{
{\includegraphics[width=0.52\textwidth]{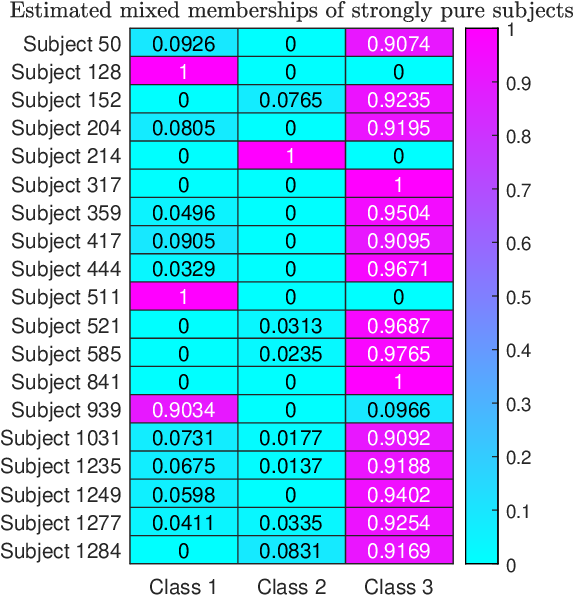}}
{\includegraphics[width=0.73\textwidth]{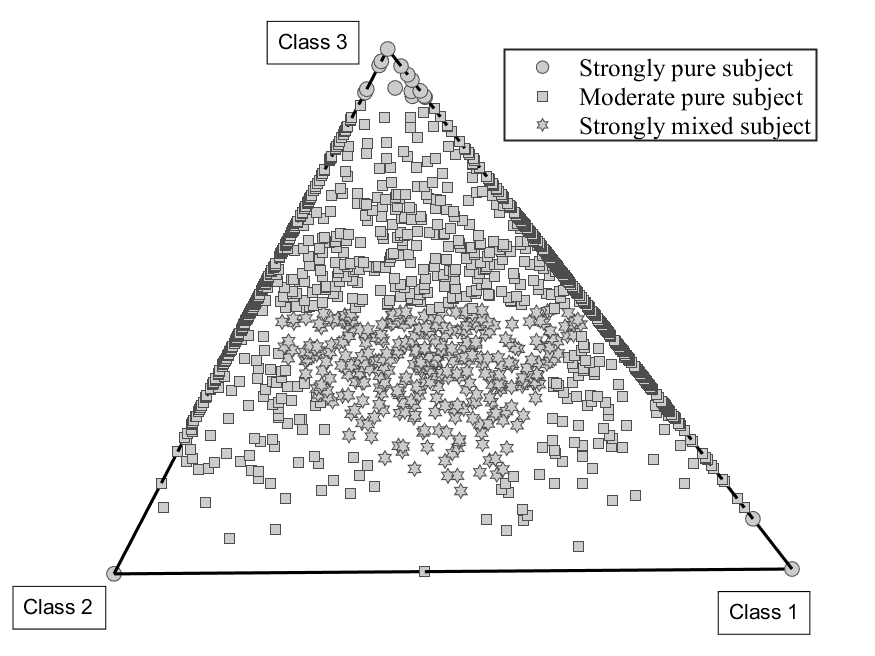}}
}
\caption{Left: estimated mixed memberships of strongly pure subjects for the BFPT data. Right: ternary diagram of $\hat{\Pi}$.}
\label{PiPure} 
\end{figure}

\begin{figure}
\centering
\resizebox{\columnwidth}{!}{
{\includegraphics[width=6\textwidth]{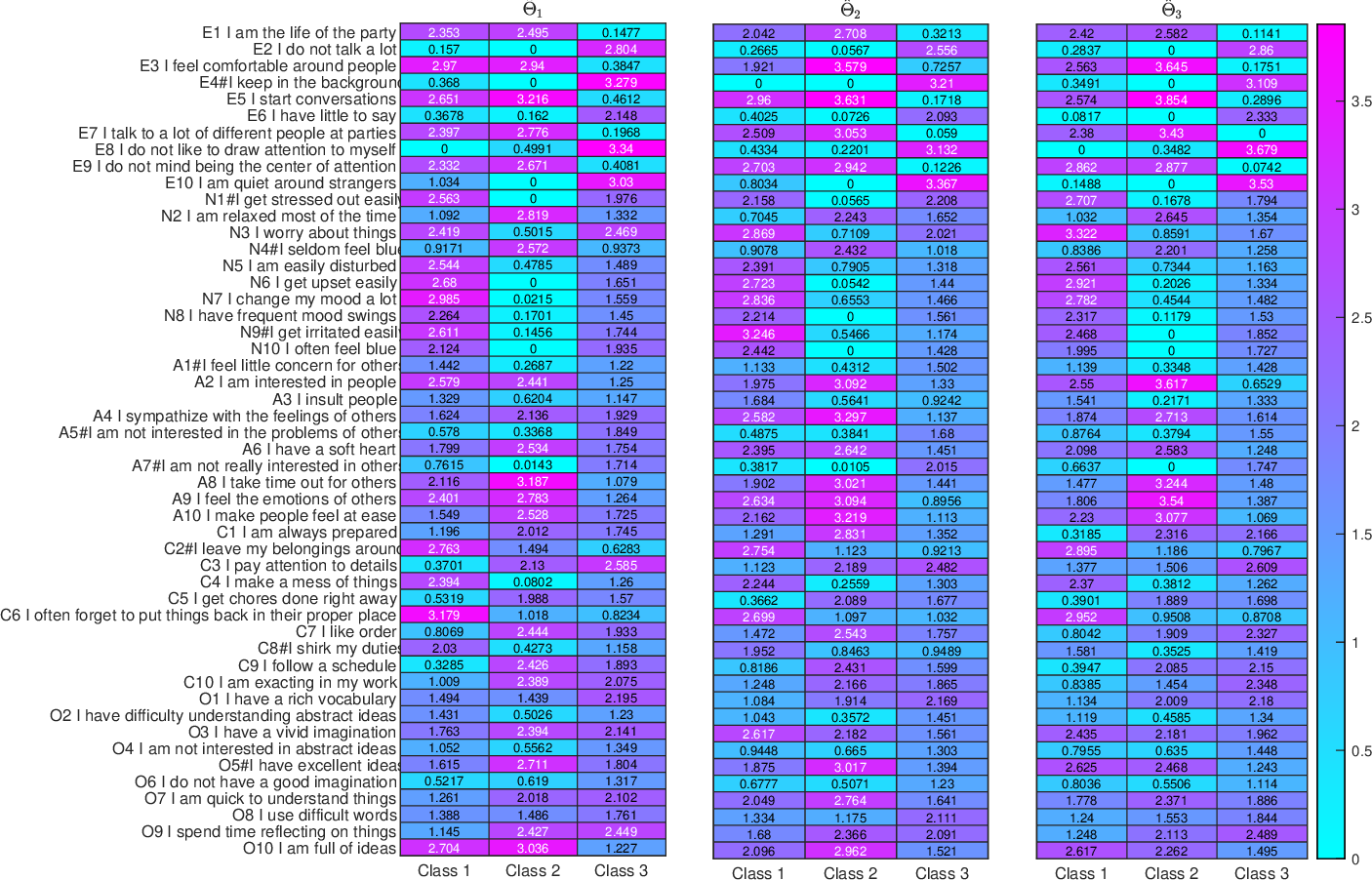}}
}
\caption{Estimated item parameter matrices for the three layers of the BFPT data.}
\label{eTheta123} 
\end{figure}

Figure \ref{eTheta123} displays the estimated item parameter matrices for the BFPT data. Below, we provide a comprehensive interpretation of these results across the five personality dimensions.

\textbf{Extraversion dimension:}
\begin{itemize}
  \item Class 1: Individuals in this class exhibit moderate agreement with statements like ``I feel comfortable around people" (E3) and ``I talk to a lot of different people at parties" (E7). This reflects a balanced level of extroversion: they can interact with others but do not seek to dominate social interactions.
  \item Class 2: Individuals in this class are more likely to strongly agree with statements such as ``I am the life of the party" (E1) and ``I do not mind being the center of attention" (E9). This suggests they are more outgoing and proactive in social settings, and comfortable engaging with others even in unfamiliar environments.
  \item Class 3: Individuals in this class are less likely to strongly agree with statements such as ``I am the life of the party" (E1) and ``I start conversations" (E5). This indicates a lower tendency toward extroverted behaviors, suggesting they may prefer solitary activities or smaller social gatherings.
\end{itemize}
\textbf{Neuroticism dimension:}
\begin{itemize}
  \item Class 1: Individuals in this class are more likely to strongly agree with statements like ``I get stressed out easily" (N1) and ``I often feel blue" (N10). This indicates a higher likelihood of experiencing stress and negative emotions, suggesting they may be more sensitive to stressors and prone to emotional fluctuations.
  \item Class 2: Individuals in this class are less likely to strongly agree with statements like ``I am relaxed most of the time" (N2) and ``I seldom feel blue" (N4). This indicates lower stress levels and negative emotions, suggesting they are more emotionally stable and better able to cope with stress.
  \item Class 3: Individuals in this class exhibit moderate agreement with statements such as ``I worry about things" (N3) and ``I change my mood a lot" (N7). This suggests they experience some level of worry and mood variability but to a lesser extent than Class 1.
\end{itemize}
\textbf{Agreeableness dimension:}
\begin{itemize}
  \item Class 1: Individuals in this class show moderate agreement with statements for items like ``I am interested in people" (A2) and ``I sympathize with the feelings of others" (A4). This indicates a moderate level of empathy and concern for others.
  \item Class 2: Individuals in this class may have relatively high response values for items such as ``I make people feel at ease" (A10). This suggests they are skilled at making others feel comfortable and are more considerate of others' emotions.
  \item Class 3: Individuals in this class may exhibit relatively low response values for items such as ``I sympathize with the feelings of others" (A4) and ``I feel the emotions of others" (A9). This suggests they may be less considerate of others' feelings.
\end{itemize}
\textbf{Conscientiousness dimension:}
\begin{itemize}
  \item Class 1: Individuals in this class may have relatively low response values for items like ``I am always prepared" (C1) and ``I follow a schedule" (C9). This suggests they may be less organized and less likely to plan ahead.
  \item Class 2: Individuals in this class may have relatively high response values for items like ``I get chores done right away" (C5) and ``I follow a schedule" (C9). This suggests they are more conscientious, tending to complete tasks promptly and to high standards.
  \item Class 3: Individuals in this class exhibit moderate response values for items such as ``I make a mess of things" (C4) and ``I shirk my duties" (C8). This indicates a moderate level of attention to detail and organizational skills.
\end{itemize}
\textbf{Openness dimension:}
\begin{itemize}
  \item Class 1: Individuals in this class show moderate response values. This indicates a moderate level of openness to new experiences and ideas.
  \item Class 2: Individuals in this class may have relatively low response values for items such as ``I have difficulty understanding abstract ideas" (O2), ``I am not interested in abstract ideas" (O4), and ``I do not have a good imagination" (O6). This suggests they are more open to new experiences and ideas, with stronger curiosity and creativity.
  \item Class 3: Individuals in this class may have relatively low response values for items such as ``I have a vivid imagination" (O3) and ``I have excellent ideas" (O5). This suggests they may be less open to new experiences and ideas.
\end{itemize}
\section{Conclusion}\label{sec6}
In this paper, we propose a novel model, the multi-layer GoM model, which extends the traditional GoM model to effectively describe multi-layer ordinal categorical data with a latent mixed membership structure. Our multi-layer GoM assumes that the response matrix for each layer is generated by the classical GoM model, sharing common mixed memberships but with varying item parameter matrices. To facilitate GoM analysis in multi-layer ordinal categorical data, we develop a new approach, GoM-DSoG, inspired by the recently popular technique of debiased spectral clustering in network analysis. We establish the estimation consistency of our method and find that, compared to single-layer analysis, multi-layer analysis exhibits greater power for the task of grade of membership analysis. Furthermore, we propose an efficient approach for estimating the number of latent classes of multi-layer ordinal categorical data with latent mixed membership structures. Our extensive experimental studies demonstrate the effectiveness of our methods in estimating subjects' mixed memberships, the number of latent classes, and other model parameters. To our knowledge, we are the first to explore the grade of membership analysis in multi-layer ordinal categorical data. Our contributions advance the field of ordinal categorical data analysis and have practical implications, particularly in psychological testing and other applications involving multi-layer ordinal categorical data.

The multi-layer GoM framework and the proposed GoM-DSoG algorithm offer a robust and scalable approach for analyzing complex multi-layer ordinal categorical data. However, the current lack of publicly available data underscores a critical methodological gap that our work seeks to address. A promising direction for future work involves applying this methodology to real-world longitudinal studies where the same ordinal-scale instrument is administered repeatedly over time. We encourage researchers in fields such as educational measurement (e.g., tracking student attitudes across a semester with identical surveys), clinical psychology (e.g., monitoring patient symptom severity through repeated clinical assessments), and organizational behavior (e.g., measuring employee engagement across multiple annual surveys) to utilize our model and algorithm to further validate their practical utility and provide valuable insights into the latent mixed memberships of subjects in multi-layer ordinal categorical data.
\section*{Acknowledgements}
H.Q. was supported by the Scientific Research Foundation of Chongqing University of Technology (Grant No. 2024ZDR003), and the Science and Technology Research Program of Chongqing Municipal Education Commission (Grant No. KJQN202401168).
\section{Supplementary Materials}\label{s1}
\subsection{Proof of Proposition 1}
\begin{proof}
Since $\mathcal{S}=\sum_{l=1}^{L}\Pi\Theta'_{l}\Theta_{l}\Pi'=\Pi(\sum_{l=1}^{L}\Theta'_{l}\Theta_{l})\Pi'$, and given that $\mathrm{rank}(\sum_{l=1}^{L}\Theta'_{l}\Theta_{l})=K$ and $\Pi$ satisfies the condition stated in Equation (2.2) in the main context, by the first result in Theorem 2.1 \citep{mao2021estimating}, $\Pi$ is identifiable up to a permutation, i.e.,  $\tilde{\Pi}=\Pi\mathcal{P}$. Furthermore, based on the second bullet of Lemma 1 and the fact that $\tilde{\Pi}=\Pi\mathcal{P}$, we have:
\begin{align*}
\Theta_{l} &= \mathscr{R}'_{l}\Pi (\Pi'\Pi)^{-1}, \\
\tilde{\Theta}_{l} &= \mathscr{R}'_{l}\tilde{\Pi} (\tilde{\Pi}'\tilde{\Pi})^{-1} = \mathscr{R}'_{l}\Pi\mathcal{P}(\mathcal{P}'\Pi'\Pi\mathcal{P})^{-1} = \mathscr{R}'_{l}\Pi\mathcal{P}\mathcal{P}^{-1}(\Pi'\Pi)^{-1}(\mathcal{P}')^{-1} = \Theta_{l}\mathcal{P}.
\end{align*}
Thus, for all $l\in[L]$, $\Theta_{l}$ is also identifiable up to the same permutation as $\Pi$.
\end{proof}
\subsection{Proof of Lemma 1}
\begin{proof}
$\mathcal{S}=\sum_{l=1}^{L}\Pi\Theta'_{l}\Theta_{l}\Pi'=\Pi(\sum_{l=1}^{L}\Theta'_{l}\Theta_{l})\Pi'=U\Lambda U'$ gives $U=\Pi(\sum_{l=1}^{L}\Theta'_{l}\Theta_{l})\Pi'U\Lambda^{-1}$. Let $X=(\sum_{l=1}^{L}\Theta'_{l}\Theta_{l})\Pi'U\Lambda^{-1}$. Then, $U=\Pi X$ gives $U(\mathcal{I},:)=(\Pi X)(\mathcal{I},:)=\Pi(\mathcal{I},:)X=X$, which leads to $U=\Pi U(\mathcal{I},:)$. Additionally, since $\mathscr{R}_{l}=\Pi\Theta'_{l}$, we have $\Theta_{l}=\mathscr{R}'_{l}\Pi(\Pi'\Pi)^{-1}$ for $l\in[L]$.
\end{proof}
\subsection{Proof of Theorem 1}
\begin{proof}
First, we prove the following lemma.
\begin{lem}\label{boundSumInfinity}
If Assumption 1 is satisfied, with probability $1-o(\frac{1}{(N+J+L)^{3}})$,
\begin{align*}
\|S-\mathcal{S}\|_{\infty}=O(\sqrt{\rho^{2}NJL\mathrm{log}(N+J+L)})+\rho^{2}JL.
\end{align*}
\end{lem}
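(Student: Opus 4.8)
The plan is to separate $S-\mathcal{S}$ into a deterministic bias and a mean-zero fluctuation and to control the maximum absolute row sum of each. Because $D_{l}=\mathrm{diag}(R_{l}R'_{l})$, the matrix $S=\sum_{l\in[L]}(R_{l}R'_{l}-D_{l})$ is exactly the off-diagonal part of $\sum_{l}R_{l}R'_{l}$ and has an identically zero diagonal. For $i\neq i'$, the independence of $R_{l}(i,j)$ and $R_{l}(i',j)$ gives $\mathbb{E}[R_{l}(i,j)R_{l}(i',j)]=\mathscr{R}_{l}(i,j)\mathscr{R}_{l}(i',j)$, so $\mathbb{E}[S]=\mathcal{S}-\mathrm{diag}(\mathcal{S})$ and therefore $S-\mathcal{S}=(S-\mathbb{E}[S])-\mathrm{diag}(\mathcal{S})$. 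The bias term is immediate: since $\mathscr{R}_{l}(i,j)\le\rho$, we have $\|\mathrm{diag}(\mathcal{S})\|_{\infty}=\max_{i}\sum_{l\in[L],j\in[J]}\mathscr{R}^{2}_{l}(i,j)\le\rho^{2}JL$, which supplies the additive $\rho^{2}JL$ in the stated bound.

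For the fluctuation I would fix a pair $i\neq i'$ and write $(S-\mathbb{E}[S])(i,i')=\sum_{l\in[L],j\in[J]}\zeta_{lj}$, where $\zeta_{lj}=R_{l}(i,j)R_{l}(i',j)-\mathscr{R}_{l}(i,j)\mathscr{R}_{l}(i',j)$ is a sum of independent mean-zero scalars. Using $R_{l}(\cdot,\cdot)\in[0,M]$ and $\mathscr{R}_{l}\le\rho$ gives $|\zeta_{lj}|=O(1)$, while $\mathbb{E}[R^{2}_{l}(i,j)]=\mathscr{R}_{l}(i,j)+\mathscr{R}^{2}_{l}(i,j)(1-1/M)=O(\rho)$ together with the independence of rows $i$ and $i'$ yields $\mathrm{Var}(\zeta_{lj})=O(\rho^{2})$ and hence $\sum_{l,j}\mathrm{Var}(\zeta_{lj})=O(\rho^{2}JL)$. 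Applying the Bernstein inequality of \cite{tropp2012user} to this scalar sum and taking a union bound over the at most $(N+J+L)^{2}$ pairs produces, with probability $1-o((N+J+L)^{-3})$, the entrywise estimate $\max_{i\neq i'}|(S-\mathbb{E}[S])(i,i')|=O(\sqrt{\rho^{2}JL\log(N+J+L)})$, with Assumption \ref{Assum1} controlling the interplay between the variance and the bounded-range contributions in Bernstein so that the variance term governs the rate.

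It then remains to pass from this entrywise control to the maximum absolute row sum $\|S-\mathbb{E}[S]\|_{\infty}=\max_{i}\sum_{i'}|(S-\mathbb{E}[S])(i,i')|$ and to add back the $\rho^{2}JL$ bias, and this assembly is the main obstacle. The $N$ entries in a fixed row $i$ are not independent: they all share the pivot randomness $R_{l}(i,\cdot)$, so one cannot simply add up the individual entrywise tail bounds without either inflating by the row length or mishandling the shared noise. The careful route is to treat row $i$ as the $\ell_{1}$-norm of a vector sum $\sum_{l\in[L],j\in[J]}w^{(i)}_{lj}$ of independent mean-zero random vectors with $w^{(i)}_{lj}(i')=\zeta_{lj}$, and to exploit the structure induced by the common factor $R_{l}(i,j)$. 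I expect the dominant and most delicate piece to be the quadratic-in-noise component $\sum_{l,j}\epsilon_{lj}(i)\epsilon_{lj}(i')$, where $\epsilon_{lj}(\cdot)=R_{l}(\cdot,j)-\mathscr{R}_{l}(\cdot,j)$; its variance accounting is what produces the $\rho^{2}$ scaling inside the square root, and steering the accumulated row sum to the target order $\sqrt{\rho^{2}NJL\log(N+J+L)}$, rather than the cruder estimate that a naive entrywise union bound would yield, is the crux of the argument.
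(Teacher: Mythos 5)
Your setup coincides with the paper's and is correct as far as it goes: $S$ has zero diagonal, $\mathbb{E}[S]=\mathcal{S}-\mathrm{diag}(\mathcal{S})$ by independence across subjects, the diagonal of $\mathcal{S}$ is bounded by $\rho^{2}JL$ and supplies the additive term, and your per-summand variance $\mathrm{Var}(\zeta_{lj})=O(\rho^{2})$ agrees with the paper's computation of $\mathbb{E}[z_{(ih)}^{2}]\leq(\rho^{2}+2\rho^{3})JL$. The problem is that you stop exactly where the proof has to happen. The entrywise route you actually carry out gives $\max_{i\neq i'}|(S-\mathbb{E}[S])(i,i')|=O(\sqrt{\rho^{2}JL\log(N+J+L)})$, and summing that over the $N$ entries of a row yields $O(N\sqrt{\rho^{2}JL\log(N+J+L)})$, which exceeds the target $O(\sqrt{\rho^{2}NJL\log(N+J+L)})$ by a factor of order $\sqrt{N/\log(N+J+L)}$. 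You name the passage from entrywise control to the row sum ``the main obstacle'' and ``the crux of the argument,'' and then only speculate about a vector-valued reorganization without executing it or exhibiting the mechanism by which the accumulated row sum drops from order $N\rho\sqrt{JL}$ to order $\rho\sqrt{NJL\log(N+J+L)}$. That is a genuine gap: the statement you are asked to prove is precisely this row-sum bound, and no completed argument for it appears in your proposal.

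For comparison, the paper never goes entrywise. It writes the row sum as $\sum_{h\neq i}|z_{(ih)}|$ with $z_{(ih)}=\sum_{l\in[L]}\sum_{j\in[J]}(R_{l}(i,j)R_{l}(h,j)-\mathscr{R}_{l}(i,j)\mathscr{R}_{l}(h,j))$, introduces a sign vector $v\in\{-1,1\}^{(N-1)\times1}$, and applies the Bernstein inequality of \cite{tropp2012user} once to the full weighted sum $F_{(i)}=\sum_{h\neq i}z_{(ih)}v(h)$, whose variance proxy is $(\rho^{2}+2\rho^{3})JL\|v\|_{F}^{2}=O(\rho^{2}NJL)$; this single application, combined with a union bound over $i$ under Assumption \ref{Assum1}, is what places the factor $N$ inside the square root and yields $\max_{i}|F_{(i)}|=O(\sqrt{\rho^{2}NJL\log(N+J+L)})$. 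Note that the dependence issue you flag (the summands indexed by $h$ share the randomness of row $i$, and the maximizing sign vector is itself random) is exactly the difficulty this step must confront, so identifying it is not a substitute for resolving it; if you want to pursue your sketched decomposition into independent vector-valued summands $w^{(i)}_{lj}$, you would still need to show how the resulting $\ell_{1}$-norm control avoids the extra $\sqrt{N}$ loss, and your proposal does not do so.
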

\begin{proof}
Under the proposed model, we get
\begin{align*}
 \|S-\mathcal{S}\|_{\infty}&=\mathrm{max}_{i\in[N]}\sum_{h\in[N]}|S(i,h)-\mathcal{S}(i,h)|=\mathrm{max}_{i\in[N]}\sum_{h\in[N]}|(\sum_{l\in[L]}R_{l}R'_{l}-D_{l})(i,h)-(\sum_{l\in[L]}\mathscr{R}_{l}\mathscr{R}'_{l})(i,h)|\\
 &=\mathrm{max}_{i\in[N]}\sum_{h\in[N]}|\sum_{l\in[L]}\sum_{j\in[J]}R_{l}(i,j)R_{l}(h,j)-\sum_{l\in[L]}\sum_{j\in[J]}\mathscr{R}_{l}(i,j)\mathscr{R}_{l}(h,j)-\sum_{l\in[L]}D_{l}(i,h)|\\
 &=\mathrm{max}_{i\in[N]}\sum_{h\neq i, h\in[N]}|\sum_{l\in[L]}\sum_{j\in[J]}R_{l}(i,j)R_{l}(h,j)-\sum_{l\in[L]}\sum_{j\in[J]}\mathscr{R}_{l}(i,j)\mathscr{R}_{l}(h,j)|\\
 &+\mathrm{max}_{i\in[N]}|\sum_{l\in[L]}\sum_{j\in[J]}R^{2}_{l}(i,j)-\sum_{l\in[L]}\sum_{j\in[J]}\mathscr{R}^{2}_{l}(i,j)-\sum_{l\in[L]}D_{l}(i,i)|\\
 &=\mathrm{max}_{i\in[N]}\sum_{h\neq i, h\in[N]}|\sum_{l\in[L]}\sum_{j\in[J]}(R_{l}(i,j)R_{l}(h,j)-\mathscr{R}_{l}(i,j)\mathscr{R}_{l}(h,j))|+\mathrm{max}_{i\in[N]}\sum_{l\in[L]}\sum_{j\in[J]}\mathscr{R}^{2}_{l}(i,j)\\
 &\leq\mathrm{max}_{i\in[N]}\sum_{h\neq i, h\in[N]}|\sum_{l\in[L]}\sum_{j\in[J]}(R_{l}(i,j)R_{l}(h,j)-\mathscr{R}_{l}(i,j)\mathscr{R}_{l}(h,j))|+\sum_{l\in[L]}\sum_{j\in[J]}\rho^{2}\\
 &=\mathrm{max}_{i\in[N]}\sum_{h\neq i, h\in[N]}|\sum_{l\in[L]}\sum_{j\in[J]}(R_{l}(i,j)R_{l}(h,j)-\mathscr{R}_{l}(i,j)\mathscr{R}_{l}(h,j))|+\rho^{2}JL\\
 \end{align*}
Define $v$ as any $(N-1)$-by-$1$ vector. For $i\in[N]$ and $h\neq i, h\in[N]$, let $z_{(ih)}=\sum_{l\in[L]}\sum_{j\in[J]}(R_{l}(i,j)R_{l}(h,j)-\mathscr{R}_{l}(i,j)\mathscr{R}_{l}(h,j))$. Then, for $i\in[N]$, define $F_{(i)}=\sum_{h\neq i,h\in[N]}z_{(ih)}v(h)$. Let $\tau=\mathrm{max}_{i\in[N], h\neq i, h\in[N]}|z_{(ih)}|$. Given that $\mathbb{E}(z_{(ih)})=0$, to simplify our analysis, we assume $\tau$ is no larger than a constant $C$. For $i\in[N]$ and $h\neq i, h\in[N]$, the following conclusions hold:
\begin{itemize}
  \item $\mathbb{E}(z_{(ih)}v(h))=0$ because $h\neq i$.
  \item $|z_{(ih)}v(h)|\leq C\|v\|_{\infty}$.
  \item Set $\sigma^{2}=\sum_{h\neq i, h\in[N]}\mathbb{E}(z^{2}_{(ih)}v^{2}(h))$. Under our multi-layer GoM model, we have
  \begin{align*}
\sum_{h\neq i,h\in[H]}\mathbb{E}[z^{2}_{(ih)}v^{2}(h)]&=\sum_{h\neq i,h\in[N]}v^{2}(h)\mathbb{E}[z^{2}_{(ih)}]\\
&=\sum_{h\neq i,h\in[N]}v^{2}(h)\sum_{l\in[L]}\sum_{j\in[J]}\mathbb{E}((R_{l}(i,j)R_{l}(h,j)-\mathscr{R}_{l}(i,j)\mathscr{R}_{l}(h,j))^{2})\\
&=\sum_{h\neq i,h\in[N]}v^{2}(h)\sum_{l\in[L]}\sum_{j\in[J]}(\mathbb{E}(R^{2}_{l}(i,j))\mathbb{E}(R^{2}_{l}(h,j))-\mathscr{R}^{2}_{l}(i,j)\mathscr{R}^{2}_{l}(h,j))\\
&=\sum_{h\neq i,h\in[N]}v^{2}(h)\sum_{l\in[L]}\sum_{j\in[J]}((\mathrm{Var}(R_{l}(i,j))+\mathscr{R}^{2}_{l}(i,j))(\mathrm{Var}(R_{l}(h,j))\\
&~~+\mathscr{R}^{2}_{l}(h,j))-\mathscr{R}^{2}_{l}(i,j)\mathscr{R}^{2}_{l}(h,j))\\
&=\sum_{h\neq i,h\in[N]}v^{2}(h)\sum_{l\in[L]}\sum_{j\in[J]}(\mathrm{Var}(R_{l}(i,j))\mathrm{Var}(R_{l}(h,j))+\mathscr{R}^{2}_{l}(i,j)\mathrm{Var}(R_{l}(h,j))\\
&~~+\mathscr{R}^{2}_{l}(h,j)\mathrm{Var}(R_{l}(i,j)))\\
&\leq\sum_{h\neq i,h\in[N]}v^{2}(h)\sum_{l\in[L]}\sum_{j\in[J]}(\mathrm{Var}(R_{l}(i,j))\mathrm{Var}(R_{l}(h,j))+\rho^{2}\mathrm{Var}(R_{l}(h,j))\\
&~~+\rho^{2}\mathrm{Var}(R_{l}(i,j)))\\
&=\sum_{h\neq i,h\in[N]}v^{2}(h)\sum_{l\in[L]}\sum_{j\in[J]}(M\frac{\mathscr{R}_{l}(i,j)}{M}(1-\frac{\mathscr{R}_{l}(i,j)}{M})M\frac{\mathscr{R}_{l}(h,j)}{M}(1-\frac{\mathscr{R}_{l}(h,j)}{M})\\
&~~+\rho^{2}M\frac{\mathscr{R}_{l}(h,j)}{M}(1-\frac{\mathscr{R}_{l}(h,j)}{M})+\rho^{2}M\frac{\mathscr{R}_{l}(i,j)}{M}(1-\frac{\mathscr{R}_{l}(i,j)}{M}))\\
&\leq\sum_{h\neq i,h\in[N]}v^{2}(h)\sum_{l\in[L]}\sum_{j\in[J]}(\rho^{2}+2\rho^{3})=(\rho^{2}+2\rho^{3})JL\|v\|^{2}_{F}.
\end{align*}
\end{itemize}
According to the Bernstein inequality in Theorem 1.4  \cite{tropp2012user}, for any $t>0$, we obtain
\begin{align*}
\mathbb{P}(|F_{(i)}|\geq t)\leq\mathrm{exp}(\frac{-t^{2}/2}{(\rho^{2}+2\rho^{3})\|v\|^{2}_{F}JL+\frac{C\|v\|_{\infty}t}{3}}).
\end{align*}
Set $t$ as $\sqrt{(\rho^{2}+2\rho^{3}\|v\|^{2}_{F}JL\mathrm{log}(N+J+L)}\times\frac{\sqrt{\alpha+1}+\sqrt{\alpha+19}}{3}\sqrt{\alpha+1}$ for any $\alpha\geq0$. Assuming that $(\rho^{2}+2\rho^{2})\|v\|^{2}_{F}JL\geq C^{2}\|v\|^{2}_{\infty}\mathrm{log}(N+J+L)$ is satisfied, we get:
\begin{align*}
\mathbb{P}(|F_{(i)}|\geq t)&\leq\mathrm{exp}(-(\alpha+1)\mathrm{log}(N+J+L)\frac{1}{\frac{18}{(\sqrt{\alpha+1}+\sqrt{\alpha+19})^{2}}+\frac{2\sqrt{\alpha+1}}{\sqrt{\alpha+1}+\sqrt{\alpha+19}}\sqrt{\frac{C^{2}\|V\|^{2}_{\infty}\mathrm{log}(N+J+L)}{(\rho^{2}+2\rho^{3})\|v\|^{2}_{F}JL}}})\\
&\leq\frac{1}{(N+J+L)^{\alpha+1}}.
\end{align*}
Setting $v\in\{-1,1\}^{(N-1)\times 1}$ and $\alpha=3$ gives: when $(\rho^{2}+2\rho^{3})(N-1)JL \gg \mathrm{log}(N+J+L)$, which is equivalent to $\rho^{2}NJL \gg \mathrm{log}(N+J+L)$ since we allow $\rho$ to approach zero and $N-1 \approx N$ for large $N$, with probability $1-o(\frac{1}{(N+J+L)^{3}})$,
\begin{align*}
\mathrm{max}_{i\in[N]}|F_{(i)}|=O(\sqrt{\rho^{2}NJL\mathrm{log}(N+J+L)}).
\end{align*}
Hence, we have $\|S-\mathcal{S}\|_{\infty}=O(\sqrt{\rho^{2}NJL\mathrm{log}(N+J+L)})+\rho^{2}JL$.
\end{proof}

Theorem 4.2 \citep{cape2019the} says that when $|\lambda_{K}(\mathcal{S})|\geq 4\|S-\mathcal{S}\|_{\infty}$ is satisfied,
\begin{align*}
\|\hat{U}-U\mathcal{O}\|_{2\rightarrow\infty}\leq14\frac{\|S-\mathcal{S}\|_{\infty}\|U\|_{2\rightarrow\infty}}{|\lambda_{K}(\mathcal{S})|},
\end{align*}
where $\mathcal{O}$ is an orthogonal matrix. Setting $\varpi:=\|\hat{U}\hat{U}'-UU'\|_{2\rightarrow\infty}$ gives
\begin{align*}
\varpi\leq2\|\hat{U}-U\mathcal{O}\|_{2\rightarrow\infty}\leq14\frac{\|S-\mathcal{S}\|_{\infty}\|U\|_{2\rightarrow\infty}}{|\lambda_{K}(\mathcal{S})|}.
\end{align*}
Combining Condition 1 with Lemma 3.1 \citep{mao2021estimating} gives $\|U\|_{2\rightarrow\infty}=O(\sqrt{\frac{1}{N}})$. Thus,
\begin{align*}
\varpi=O(\frac{\|S-\mathcal{S}\|_{\infty}}{|\lambda_{K}(\mathcal{S})|\sqrt{N}}). \end{align*}
For $|\lambda_{K}(\mathcal{S})|$, we have $|\lambda_{K}(\mathcal{S})|\geq\rho^{2}\lambda_{K}(\Pi'\Pi)|\lambda_{K}(\sum_{l\in[L]}B'_{l}B_{l})|$, which gives
\begin{align*}
\varpi=O(\frac{\|S-\mathcal{S}\|_{\infty}}{\rho^{2}\sqrt{N}\lambda_{K}(\Pi'\Pi)|\lambda_{K}(\sum_{l\in[L]}B'_{l}B_{l})|}). \end{align*}
By comparing the steps of our GoM-DSoG algorithm with that of the Algorithm 1 \citep{mao2021estimating} in the task of estimating the mixed membership matrix $\Pi$, we find that all of their steps are the same except that the eigenvector matrix $\hat{U}$ is obtained from $S$ in GoM-DSoG while it is from the adjacency matrix in Algorithm 1 in \citep{mao2021estimating}, where we do not consider the prune step in \citep{mao2021estimating}. Therefore, by Equation (3) of 3.2 \citep{mao2021estimating}, Condition 1, Assumption 2, and Lemma \ref{boundSumInfinity}, we get
\begin{align*}
\mathrm{max}_{i\in[N]}\|e'_{i}(\hat{\Pi}-\Pi\mathcal{P})\|_{1}&=O(\varpi\kappa(\Pi'\Pi)\sqrt{\lambda_{1}(\Pi'\Pi)})=O(\frac{\|S-\tilde{S}\|_{\infty}}{\rho^{2}NJL})\\
&=O(\sqrt{\frac{\mathrm{log}(N+J+L)}{\rho^{2}NJL}})+O(\frac{1}{N}),
\end{align*}
where $\kappa(\cdot)$ denotes the condition number.
\end{proof}
\subsection{Proof of Lemma 2}
\begin{proof}
Let $\Pi(\Pi'\Pi)^{-1}=G$ and $\hat{\Pi}(\hat{\Pi}'\hat{\Pi})^{-1}=\hat{G}$. By Condition 1, we get
\begin{align*}
\|\sum_{l\in[L]}(\hat{\Theta}_{l}-\Theta_{l}\mathcal{P})\|&=\|(\sum_{l\in[L]}R_{l})'\hat{G}-(\sum_{l\in[L]}\mathscr{R}_{l})'G\mathcal{P}\|\leq\|(\sum_{l\in[L]}(R_{l}-\mathscr{R}_{l}))'\hat{G}\|+\|(\sum_{l\in[L]}\mathscr{R}_{l})'(\hat{G}-G\mathcal{P})\|\\
&\leq\|\sum_{l\in[L]}(R_{l}-\mathscr{R}_{l})\|\|\hat{G}\|+\rho \|\Pi\|\|\sum_{l\in[L]}B_{l}\|\|\hat{G}-G\mathcal{P}\|\\
&=O(\frac{\|\sum_{l\in[L]}(R_{l}-\mathscr{R}_{l})\|}{\sqrt{N}})+\rho|\lambda_{1}(\sum_{l\in[L]}B_{l})|.
\end{align*}
According to the first statement of Lemma 2 \citep{MLLCM},
assume that $\rho L\mathrm{max}(N,J)\gg\mathrm{log}(N+J+L)$, with probability $1-o(\frac{1}{(N+J+L)^{3}})$, we have $\|\sum_{l\in[L]}(R_{l}-\mathscr{R}_{l})\|=O(\sqrt{\rho L\mathrm{max}(N,J)\mathrm{log}(N+J+L)})$. When $|\lambda_{1}(\sum_{l\in[L]}B_{l})|\geq c_{2}\sqrt{J}L$, we have
\begin{align*}
\frac{\|\sum_{l\in[L]}(\hat{\Theta}_{l}-\Theta_{l}\mathcal{P})\|_{F}}{\|\sum_{l\in[L]}\Theta_{l}\|_{F}}&\leq\frac{\sqrt{K}\|\sum_{l\in[L]}(\hat{\Theta}_{l}-\Theta_{l}\mathcal{P})\|}{\rho|\lambda_{1}(\sum_{l\in[L]}B_{l})|}=O(\frac{\|\sum_{l\in[L]}(R_{l}-\mathscr{R}_{l})\|}{\rho L\sqrt{NJ}})\\
&=O(\sqrt{\frac{\mathrm{max}(N,J)\mathrm{log}(N+J+L)}{\rho NJL}}).
\end{align*}
\end{proof}

\section{MATLAB codes of GoM-DSoG}\label{MatlabCodes}
The MATLAB codes of GoM-DSoG are provided below:
\begin{lstlisting}
function [Pi_hat,Theta_all_hat]=DSoG(R_all,K)
% An implementation of GoM-DSoG algorithm
% Inputs:
%       R_all: a tensor with R_all (:,:, l) being the N*J matrix of layer
% l, where N and J denote the number subjects and items, respectively
%       K: the number of latent classes
% Outputs:
%       Pi_hat: the N*K estimated membership matrix
%       Theta_all_hat: a tensor with Theta_all_hat(:,:,l) being the J*K
% estimated item parameter matrix.
N=size(R_all,1);J=size(R_all,2);L=size(R_all,3);M=max(max(max(R_all)));
S=zeros(N,N);
for l=1:L
    Rl=R_all(:,:,l);Dl=sum(Rl.^2,2);Dl=diag(Dl);S=S+Rl*Rl'-Dl;
end
[U,~]=eigs(S,K);[ pure ]=SPA(U,K);C=U(pure,:);Pi_hat=U/C;
Pi_hat=max(0,Pi_hat);
for i=1:N
    Pi_hat(i,:)=Pi_hat(i,:)/sum(Pi_hat(i,:));
end
Theta_all_hat=zeros(J,K,L);
for l=1:L
    Rl=R_all(:,:,l);
    theta_l_hat=min(M,max(0,Rl'*Pi_hat*inv(Pi_hat'*Pi_hat)));
    Theta_all_hat(:,:,l)=theta_l_hat;
end
\end{lstlisting}
\begin{lstlisting}
function [ pure ]=SPA(X,K)
% An implementation of SPA algorithm
% Input: X: n by K data matrix (eigenvectors in our setting),where
% n is the number of nodes, K is the vertices
% Output: pure: set of pure subjects indices
pure = [];
row_norm = vecnorm(X').^2';
for i = 1:K
    [~,idx_tmp] = max(row_norm);pure = [pure idx_tmp];
    U(i,:) = X(idx_tmp,:);
    for j = 1 : i-1
        U(i,:) = U(i,:) - U(i,:)*(U(j,:)'*U(j,:));
    end
    U(i,:) = U(i,:)/norm(U(i,:)); row_norm = row_norm - (X*U(i,:)').^2;
end
end
\end{lstlisting}
\bibhang=1.7pc
\bibsep=2pt
\fontsize{9}{14pt plus.8pt minus .6pt}\selectfont
\renewcommand\bibname{\large \bf References}
\expandafter\ifx\csname
natexlab\endcsname\relax\def\natexlab#1{#1}\fi
\expandafter\ifx\csname url\endcsname\relax
\def\url#1{\texttt{#1}}\fi
\expandafter\ifx\csname urlprefix\endcsname\relax\def\urlprefix{URL}\fi

  \bibliographystyle{chicago}      
  \bibliography{refMLGoM}

\end{document}